\theoremstyle{plain}
\newtheorem{corollary}{Corollary}
\newtheorem{theorem}{Theorem}
\newtheorem{lemma}{Lemma}
\newtheorem{fact}{Fact}
\newtheorem{definition}{Definition}
\newtheorem{assumption}{Assumption}
\theoremstyle{definition}
\theoremstyle{remark}
\newtheorem{remark}{Remark}
\newcommand{\expect}{\operatorname{\mathbb{E}}\expectarg}
\DeclarePairedDelimiterX{\expectarg}[1]{[}{]}{%
  \ifnum\currentgrouptype=16 \else\begingroup\fi
  \activatebar#1
  \ifnum\currentgrouptype=16 \else\endgroup\fi
}
\newcommand{\prob}{\operatorname{\mathbb{P}}\probarg}
\DeclarePairedDelimiterX{\probarg}[1]{(}{)}{%
  \ifnum\currentgrouptype=16 \else\begingroup\fi
  \activatebar#1
  \ifnum\currentgrouptype=16 \else\endgroup\fi
}
\newcommand{\innermid}{\nonscript\;\delimsize\vert\nonscript\;}
\newcommand{\activatebar}{%
  \begingroup\lccode`\~=`\|
  \lowercase{\endgroup\let~}\innermid 
  \mathcode`|=\string"8000
}
\def\*#1{\mathbf{#1}}
\def\&#1{\mathcal{#1}}
\def\.#1{\boldsymbol#1}
\def\^#1{\hat{#1}}
\def\[#1{\left #1}
\def\]#1{\right #1}
\DeclareMathOperator*{\argmin}{arg\,min\ }
\DeclareMathOperator*{\argmax}{arg\,max\ }
\DeclareMathAlphabet\mathbfcal{OMS}{cmsy}{b}{n}
\author{\IEEEauthorblockN{Sung-En Chiu, Nancy Ronquillo and Tara Javidi}
\vspace{2mm}

\IEEEauthorblockA{Department of Electrical and Computer Engineering \\
University of California, San Diego \\
Email: \{suchiu,nronquil,tjavidi\}@ucsd.edu}
}
\begin{document}
\singlespacing

\title{Active Learning and CSI Acquisition for mmWave Initial Alignment}


\maketitle

\begin{abstract}
Millimeter wave (mmWave) communication with large antenna arrays is a promising technique to enable extremely high data rates due to large available bandwidth in mmWave frequency bands. {In addition,} given the knowledge of an optimal directional beamforming vector, large antenna arrays have been shown to overcome both the severe signal attenuation in mmWave as well as the interference problem. However, fundamental limits on  achievable learning {rate}  of an optimal beamforming vector remain.

This paper considers the problem of adaptive and sequential optimization of the beamforming vectors during the initial access phase of communication. With a single-path channel model, the problem is reduced to actively learning the Angle-of-Arrival (AoA) of the signal sent from the user to the Base Station (BS). Drawing on the recent results in the design of a hierarchical beamforming codebook \cite{Alkhateeb2014}, sequential measurement dependent noisy search strategies \cite{Chiu2016}, and active learning from an imperfect labeler \cite{Yan2016}, an adaptive and sequential alignment algorithm is proposed.

For any given resolution and error probability of the estimated AoA, an upper bound on the expected search time of the proposed algorithm is derived via Extrinsic Jensen-Shannon Divergence. The upper bound demonstrates that the search time of the proposed algorithm asymptotically matches the performance of the noiseless bisection search up to a constant factor, {in effect,}  characterizing the AoA acquisition rate. Furthermore, the upper bound shows that the acquired AoA error probability decays exponentially fast with the search time with an exponent that is a decreasing function of the acquisition rate. 

Numerically, the proposed algorithm is compared with prior work where a significant improvement of the system communication rate is observed. Most notably, in the relevant regime of low ($-10$dB to $+5$dB) raw SNR, this establishes the first practically viable solution for initial access and, hence, the first demonstration of stand-alone mmWave communication. 

\end{abstract}
\begin{IEEEkeywords}
Millimeter wave communication, sequential search, hierarchical beamforming, rate-reliability trade off, adaptive beamforming
\end{IEEEkeywords}

\section{Introduction}


Millimeter wave (mmWave) communication with massive antenna arrays is a promising technique that increases the data rate significantly, thanks to the large available bandwidth in mmWave frequency bands. 
While an inherent challenge for mmWave communication is extremely high pathloss \cite{Maccartney2013}-~\cite{Rappaport2017}, resulting in low SNR and high link outage, the small wavelength can be exploited to deploy an array with a very large number of antennas in a relatively small area. It has been shown \cite{Molisch2017} that massive MIMO mmWave systems can be deployed to form highly directional beams to mitigate the pathloss and the associated low SNR and high link outage. However, it is important to note that the realization of highly directional beams requires a precise and reliable estimate of channel state information (CSI) \cite{Nayebi17} during the initial access phase. This paper considers the problem of actively learning an optimal beamforming vector from a fundamental limit point of view.


With the scale of millimeter wavelength and the half wavelength spacing, a large number of antennas can be packed into a modest-sized device. For large antenna arrays, however, equipping each antenna with an RF chain is too hardware costly. This prevents per antenna digital processing. A hardware friendly proposal for practically implementing large array systems in mmWave bands deploys a single RF chain where CSI acquisition reduces to finding the optimal analog beamforming along the dominant direction of the signals between the base station (BS) and the user that is trying to establish the communication link. In this paper we consider this practical scenario of mmWave communication with massive MIMO technology and the practically designed low complexity hierarchical beamforming codebook of \cite{Alkhateeb2014} to propose an efficient and adaptive beamforming algorithm that quickly identifies the optimal beamforming direction under a single dominant path channel model. Furthermore, we obtain bounds on the performance of this algorithm to asymptotically match the fundamental information theoretic limits on the speed and reliability of active learning and CSI acquisition with the given hardware constraints.

The exhaustive linear search, which utilizes beams that scan over all possible directions to pick the best one, and is proposed in IEEE 802.15 and 5G standards, requires a relatively long initial access time that linearly grows with the angle resolution (highest resolution being the number of the antenna elements).
On the theoretical front, in contrast, prior work which is based on simple measurement models noted that the problem of CSI acquisition in mmWave is closely related to that of noisy search, which itself has been shown to be closely related to the problem of channel coding over a binary input channel with (\cite{Giordani2016,Kaspi2018,Lalitha2017}) and without (\cite{Shabara2017}) feedback. Under various noise models, it is shown that the number of measurements can be kept to grow only logarithmically with the angular resolution and target error probability \cite{Kaspi2018} and \cite{Lalitha2017}. 
While these early studies did not take the practical beam patterns into account, this logarithmic scaling was later also confirmed and reported in more practical systems with realistic and empirically precise beam patterns (\cite{Alkhateeb2014, Abari2016, Song19}) with the caveat of a sufficiently large SNR model. In particular, \cite{Alkhateeb2014} carefully developed a hierarchical beamforming codebook which in the noiseless setting allows for an (adaptive) binary search over the angular space; increasing transmission power and/or time is then used to combat the measurement noise. The authors in \cite{Abari2016,Song19} showed that similar performance gains can also be achieved by a non-adaptive strategy. More specifically, the authors of \cite{Abari2016} proposed random hash functions to generate a random beamforming codebook whose acquisition time, they showed, grows only logarithmically with target resolution/error probability. The logarithmic scaling (of search time with angular resolution) could also be obtained when viewing the problem as that of sparse estimation with compressive measurements (see \cite{Ding2018} and references therein). Indeed, the authors of \cite{Song19} recover the signal direction with a non-negative least square estimate from Compressive Sensing by measuring the received power via a random beamforming codebook which hashes the angular directions similarly to \cite{Abari2016}. 

However, to guarantee coverage in low ($<5$ dB)\footnote{We note that while \cite{Song19} provides a system working under $-30$ dB SNR$_{\text{BBF}}$ (before beamforming), the system parameters used in their simulation were such that the SNR$_{\text{BBF}}$ is defined differently than raw SNR in our set up. For a fair comparison we may interpret our raw SNR = $\frac{\text{SNR}_{\text{BBF}}}{c}$ where the scaling factor $c$ is proportional to number of subcarriers and number of RF chains used in their simulation.} raw SNR regimes (cell-edge users), these beamforming techniques (random direction and direct bisection) provide marginal advantage over the exhaustive linear search as noted in \cite{Giordani2016}. This limitation of prior work to operate in high raw SNR makes them unsuitable for cell-edge users in mmWave communication. This has major practical system design implications, namely the current 5G mmWave communication in 3GPP standards \cite{3gppTR37340} supports mainly non-standalone mmWave in which the initial access phase is covered by legacy sub-6G infra-structure which provides higher SNR. This highlights the need for a strategy that can adaptively improve the measurement quality and is suitable for a low raw SNR regime.

\subsection{Our work and contributions}
\begin{figure}
    \centering
    \includegraphics[width = 0.35 \textwidth]{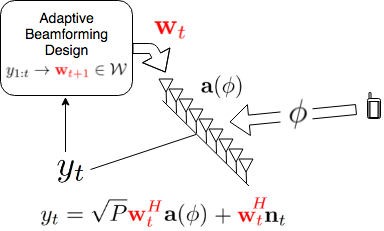}
    \caption{The active learning process of the AoA $\phi$ is to design the beams $\*w_t\in \mathcal{W}^S$ adaptively for the sequential collection of the observations $y_t$, from which at the ending of the collection is to be used for the estimation of the AoA $\phi$. }
    \label{fig:adaptive_beamforming}
\end{figure}

In this paper, we consider the problem of CSI acquisition during the initial access phase for designing the analog beamforming in an environment with a single-path channel. We formulate the CSI acquisition as an active learning of the angle-of-arrival (AoA) at the base station (BS) side where the user's beamforming vector is assumed to be fixed, as illustrated in Fig \ref{fig:adaptive_beamforming}. We consider two measures of performance for the proposed search scheme: the (expected) resolved beam width (AoA resolution) and the (expected) error probability. Based on the nature of the initial access protocol, we consider two types of stopping criteria: fixed-length stopping, where a fixed amount of search time is allocated for the initial access phase, and variable-length stopping, where search is conducted over a random stopping time. The contributions of the paper include:
\begin{itemize}
    \item We formulate the initial beam alignment for massive MIMO as active learning of the AoA through multiple sequential and adaptive search beams. Our approach draws heavily from our prior work on algorithms for noisy search \cite{Chiu2016}, active learning \cite{Naghshvar2015a}, and channel coding with feedback \cite{Lalitha2017}.

    \item We propose a new adaptive beamforming strategy that utilizes the hierarchical beamforming codebook of \cite{Alkhateeb2014}. The proposed adaptive strategy, hierarchical Posterior Matching ($hiePM$), accounts for the measurement noise and selects the beamforming vectors from the hierarchical beamforming codebook based on the posterior of the AoA. The design and analysis of $hiePM$ extends our prior work of sorted posterior matching for noisy search \cite{Chiu2016} and \cite{Lalitha2017} in that it restricts the search and the measurements to the practical and hierarchical beamforming patterns of \cite{Alkhateeb2014}.
    
    \item We analyze the proposed $hiePM$ strategy and give an upper bound on the expected acquisition time of a variable-length $hiePM$ search strategy required to reach a fixed (predetermined) target resolution and error probability in the AoA estimate. Even when the measurements are hard detected (1-bit quantized measurements), the achievable AoA acquisition rate and the error exponent of $hiePM$ are shown (Corollary~\ref{cor}) to be significantly better than those for the search methods of \cite{Alkhateeb2014} and the random hashing of \cite{Abari2016} in all raw SNR regimes.
    
    \item We show, via extensive simulations, that $hiePM$ is suitable for both fixed-length and variable-length initial access and significantly outperforms the state-of-art search strategies of \cite{Alkhateeb2014} and \cite{Abari2016}. The numerical simulations show that $hiePM$ is capable of reaching a good resolution and error probability even in a low ($<5$dB) raw SNR regime with reasonable expected search time overhead, demonstrating the possibility of stand-alone mmWave communication for the first time
    
\end{itemize}

\subsection{Notations} We use boldface letters to represent vectors and use $[n]$ as shorthand for the discrete set $\{1, 2, \ldots, n \}$. We denote the space of probability mass functions on set $\mathcal{X}$ as $P(x)$. We denote the Kullback-Leibler (KL) divergence between distribution $P$ and $Q$ by $D(P \| Q) = \sum_x P(x) \log \frac{P(x)}{Q(x)}$. The mutual information between random variable $X$ and $Y$ is defined as $I(X,Y) = \sum_{x,y} p(x,y) \log \frac{p(x,y)}{p(x)p(y)}$, where $p(x,y)$ is the joint distribution and $p(x)$ and $p(y)$ are the marginals of $X$ and $Y$. Let Bern$(p)$ denote the Bernoulli distribution with parameter $p$, and Bern$(x;p) = p^x (1-p)^{1-x}$. Let $I(q;p)$ denote the mutual information of the input $X \sim$ Bern$(q)$ and the output $Y$ of a BSC channel with crossover probability $p$. $C_1(p) := D( \text{Bern}(p) \| \text{Bern}(1-p) )$ denotes the error exponent of hypothesis testing of $\text{Bern}(p)$ versus $\text{Bern}(1-p)$. $\mathcal{CN}(\.\mu, \.\Sigma)$ denotes multivariate complex Gaussian distribution and $\mathcal{CN}(\*x ;\.\mu, \.\Sigma)$ with mean $\mu$ and covariance matrix $\.\Sigma$. $\text{Rice} (\mu,\sigma^2)$ denotes and Rician distribution and $\text{Rice} (x;\mu,\sigma^2):= \frac{x}{\sigma^2} \exp\[( \frac{-(x^2+\mu^2)}{2\sigma^2}\]) J_0(\frac{x \mu}{\sigma^2}) $ denotes its probability density function, where $J_0(\cdot)$ is the modified Bessel function of the first kind with order zero. 

\section{System Model}

We consider a sectorized cellular communication system operating in EHF (30-300 GHz) bands, where a sector is formed by the BS serving users in a range of angles {$[\theta_{\text{min}},\theta_{\text{max}}]$} as depicted in Fig.~\ref{fig:sector}.
\begin{figure}
    \centering
    \includegraphics[width = 0.4 \textwidth]{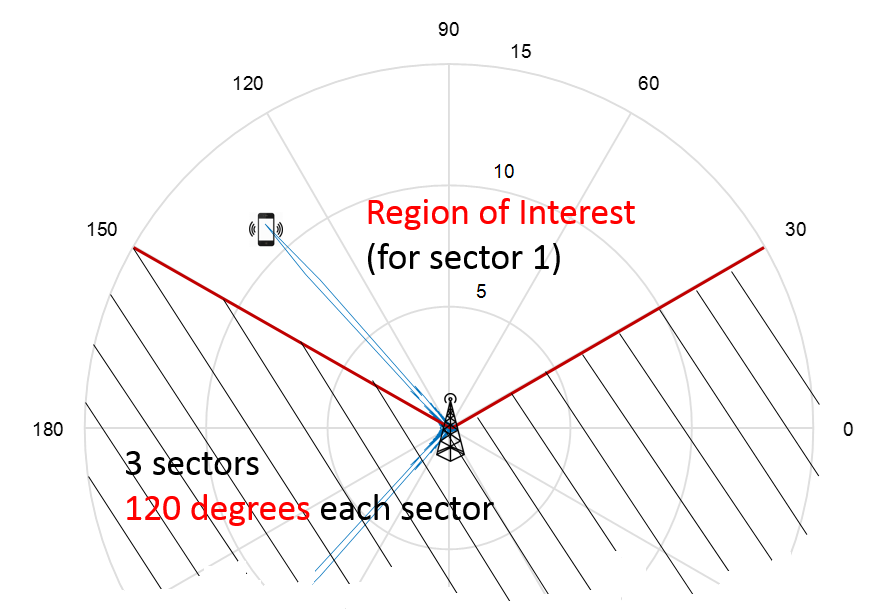}
    \caption{Base Station Serving sector $[30^o,150^o]$ and a received beam at the BS formed by uniform linear array}
    \label{fig:sector}
\end{figure}
We focus on the model with one sector and a single user, where the interference from other sectors are assumed to be negligible. This assumption is justified due to high pathloss in the EHF bands \cite{Maccartney2013}, and the orthogonality (in time or code)  of the transmissions from other users within the sector.

\label{System}

We consider a hardware architecture consisting of multiple antenna elements with a single RF chain \cite{Roh2014} on both the BS and the user sides. Beamforming is applied on the antenna elements such that the power gain due to beamforming may compensate the high pathloss in the mmWave communication system. We use a pilot-based procedure where the users send out pilots to the BS while the BS combines the signal from the antenna elements to the RF chain by the beamforming vector $\*w_t\in \mathbb{C}^N$. We will focus on the procedure of obtaining a good beamforming vector at the BS, while assuming a fixed beamforming vector at the user which allows us to model the user's antennas as a single virtual antenna. 

Let $N$ be the number of antennas at the BS, $\sqrt{P}$ be the combined effect of the transmit power and the large-scale fading (pathloss and shadowing), and $\*h \in \mathbb{C}^N$ be the small-scale frequency flat fading vector, i.e. $h_i$ is the small-scale fading between the single virtual antenna of the user and the $i$th antenna element at the BS. For small-scale channel modelling, we use the stochastic multi-path modelling (see Ch.7 in \cite{Tse2005}) assumption with a single dominant path. Furthermore, we assume that the user's mobility is negligible, i.e., the channel vector $\*h$ is time invariant. In summary, we have the following assumption:
\begin{assumption} \label{assum:single_path}
The small-scale channel can be described as:
\begin{equation} \label{eq:singlepath}
    \*h = \alpha \*a(\phi),
\end{equation}
where $\alpha \in \mathbb{C}$ is the fading coefficient and 
\begin{equation}
    \*a(\phi) :=  [ 1, e^{j\frac{2\pi d }{\lambda} \sin{\phi} },..., e^{j(N-1)\frac{2\pi d }{\lambda} \sin{\phi} }  ]
\end{equation}
\label{singlepath_alpha}
is the array manifold created by the Angle-of-Arrival (AoA) $\phi\in[\theta_{\text{min}},\theta_{\text{max}}]$ with antenna spacing $d$. Furthermore, we assume that the fading coefficient, $\alpha$, and the AoA, $\phi$, are static in time. 
\end{assumption}
Let time index $t=1,2,...$ be the time frame in which the BS can adapt the beamforming vector $\*w_t$. Each beamforming slot consists of $I$ samples of finer granularity either of time (e.g. CDMA) or of frequencies (e.g. OFDM subcarriers). Orthogonal spreading sequences $\mathbf{s}_k$ of length $I$ are sent by each of the $K$ users. In other words, we assume:
\begin{assumption} \label{assum:cdma}
\begin{equation}
     \mathbf{s}_k^H \mathbf{s}_{k'} = \begin{cases} 0 \text{ for } k = k'\\
1 \text{ for }k \neq k'\end{cases}
\end{equation}
\end{assumption}

With the assumption of orthogonality among users, correlating the pilot codes we can write the code-matched signal from a particular user as
\begin{equation}\label{eq:obsv}
\begin{aligned}
    y_t &{=}  \sqrt{P} \*w_t^H  (\sum_{k'=1}^K \*h_{k'}  \mathbf{s}_{k'}^{T} ) \mathbf{s}_k^{*} + \*w_t^H \*N_{t} \mathbf{s}_k^*\\
        &\stackrel{(a)}{=}  \alpha \sqrt{P} \*w_t^H \*a(\phi) + \*w_t^H \*n_{t},
\end{aligned}
\end{equation}
where $\*N_{t}$ is the $N \times I$  spatially uncorrelated AWGN noise matrix across the antenna elements (rows) and samples (columns). Note that in (a) we used single-path channel model (Assumption~\ref{assum:single_path}) and orthogonality of codes (Assumption~\ref{assum:cdma}) from different users as well as the static nature of the channel, $\*h$, over the code resource $I$. Finally, $\*n_{t} :=  \*N_{t} \mathbf{s}_k^* \sim \mathcal{CN}(\*0_{N\times 1},\sigma^2 \*I)$ is the equivalent noise vector at the antenna array at the output of the code-matched filter, i.e., such that $y_t$ has a raw SNR equal to $P/\sigma^2$ when no beamforming is applied.

In many practical scenarios only a partial information about $y_t$ is available to the BS. As a result, we consider the available signal to BS, $z_t$, to be of the form
\begin{equation} \label{eq:quanti}
    z_t = q(y_t),
\end{equation}
where $q(\cdot)$ represents a practically motivated  partial information processing such as a quantization function. With the received signal model in (\ref{eq:obsv}) and (\ref{eq:quanti}), we are now ready to describe the sequential beam search problem which adaptively designs the beamforming vectors $\*w_t$.




\section{Active Learning and Hierarchical Posterior Matching}
In this section we present our main result. In subsection~\ref{beam_alignment} we lay out the framework of active learning for sequential beam alignment. In subsection~\ref{Codebook} we describe the hierarchical beamforming codebook. In subsection~\ref{sect_hiePM} we describe our proposed algorithm: Hierarchical Posterior Matching for sequentially selecting the beamforming vector from the beamforming codebook. Lastly, in subsection~\ref{sec:posterior_update} we describe the posterior update for various measurement models.

\subsection{Sequential Beam Alignment via Active Learning}
\label{beam_alignment}
A sequential beam alignment problem in the initial access phase consists of a beamforming design strategy (possibly adaptive), a stopping time $\tau$, and a final beamforming vector design. Specifically, we consider a stationary beamforming strategy as a causal (possibly random) mapping function from past observations to the beamforming vector: $\*w_{t+1} = \gamma(z_{1:t},\*w_{1:t})$. Subsequently, the final beamforming vector selection $b(\cdot)$ is a (possibly random) mapping determining the final beamforming vector to be exploited for communication, $\hat{\*w} = b(z_{1:\tau},\*w_{1:\tau})$, as a function of the sequence of the observations gathered during the initial access phase $[1:\tau]$. To reduce the reconfiguration time of the beamforming vector from $\*w_{t}$ to $\*w_{t+1}$, we use a pre-designed beamforming codebook:
\begin{assumption}
The beamforming vector is chosen from a pre-designed beamforming codebook $\mathcal{W}^S$ with finite cardinality. 
\end{assumption}
Based on the nature of the protocol, we consider two criteria for selection of the length of the initial access phase:





\textbf{Fixed-length stopping time}: the user transmits a pre-determined number of frames during which the BS uses the beamforming vectors $\*w_1,\*w_2,...,\*w_T$. After the total pre-determined number of frames, $n$, the BS makes a prompt decision on the final beamforming vector $\hat{\*w}$

\textbf{Variable-length stopping time}: the user sends out the initial access signal continually until a certain target link quality can be achieved by the final beamforming vector $\hat{\*w}$ with high probability. Under a variable-length setup, the BS sends an ACK to the user which ends the initial access phase.

In Sec.~\ref{sec:hie_main}, we propose an adaptive beam alignment algorithm with both types of the stopping rules, while our analysis in subsection~\ref{analysis} focuses on the variable-length stopping time $\tau$. Our numerical studies consider the performance under both stopping rules.



Since the best beamforming vector $\hat{\*w}=\*a(\phi)$ can boost the SNR by a factor of $N$, the fading coefficient $\alpha$ can also be estimated and equalized easily if the SNR at the RF chain (after antenna combining) is high enough. Therefore, under Assumption~\ref{singlepath_alpha}, one of the major goals of the initial access phase is to learn the AoA $\phi$ so that BS can form a good beam toward that direction. Therefore, we can treat the sequential beam alignment problem by the methods of active learning \cite{Dasgupta2007,Naghshvar2015a} as shown in Fig. \ref{fig:adaptive_beamforming}, where the beamforming vector $\*w_t$ is equivalent to the query point and $y_t$ is equivalent to the response in the learning problem. The adaptivity of $\*w_t$ reflects that the query points are actively chosen as considered in active learning tasks.  

The quality of the established link, under a single-path channel model $\*h = \alpha \*a(\phi)$, is determined by the accuracy of the final point estimate, $\hat{\phi}(y_{1:\tau},\*w_{1:\tau})$, of $\phi$. In particular, a point estimate $\hat{\phi}$ together with a confidence interval $\delta$ provides robust beamforming with a certain outage probability. Hence, we measure the performance by the resolution and reliability of the final estimate $\hat{\*w}$:
\begin{definition}
Under Assumption~\ref{singlepath_alpha}, a sequential beam search strategy with an adaptive beamforming design $\gamma$, stopping time $\tau$, and final AoA estimate $\hat{\phi}$ is said to have resolution $\frac{1}{\delta}$ with error probability $\epsilon$ if
\begin{equation}
    \prob{|\hat{\phi}-\phi| > \delta} \leq \epsilon.
\end{equation}
\end{definition}
We note that, given a sufficiently large number of antennas, one can increase the resolution $1\slash\delta$ and decrease the error probability $\epsilon$ by increasing the time of sample collection $\tau$, or equivalently, by prolonging the initial access phase. In other words, the effectiveness of an initial access algorithm shall also be measured by the expected number of samples $\tau_{\epsilon,\delta}$ necessary to ensure a resolution $1\slash\delta$ and error probability $\epsilon$. From an information theoretic viewpoint, one can think of a family of sequential adaptive initial access schemes that achieves acquisition rate $R$ and reliability $E$:
\begin{definition}
Under Assumption~\ref{singlepath_alpha}, a family of sequential adaptive initial access schemes achieves acquisition rate-reliability (R,E) if and only if
\begin{equation}
    R:= \lim_{\delta \rightarrow 0} \frac{ \log (\frac{1}{\delta}) }{\expect{\tau_{\epsilon,\delta}}}, \quad E := \lim_{\epsilon \rightarrow 0} \frac{ \log (\frac{1}{\epsilon}) }{\expect{\tau_{\epsilon,\delta}}}.
\end{equation}
\end{definition}
\begin{remark}
The final beamforming vector (hence the quality of the established link) is determined by both the target resolution and the error $(\delta,\epsilon)$, and is written as ${\hat{\*w}(z_{1:\tau},\*w_{1:\tau},\epsilon,\delta)}$. Given a total communication time frame $T$, the expected spectral efficiency, under the final beamforming vector $\hat{\*w}$, is given as
\begin{equation} \label{eq:data_rate}
    \expect*{\frac{T-\tau}{T} \log \[( 1+  \frac{P \mid \alpha \hat{\*w}(z_{1:\tau},\*w_{1:\tau},\epsilon,\delta)^H \*a(\phi) \mid^2 }{\sigma^2}  \]) },
\end{equation}
and is an important performance metric from a system point of view. This performance metric, however, requires a further system optimization over the length of the initial access phase, $\tau$, and the length of the communication phase, $T$, which is outside the scope of this paper. Therefore, in our analysis we focus on the parameters $\epsilon$ and $\delta$. For a comparison of different initial beam search algorithms, the system performance of (\ref{eq:data_rate}) is also evaluated in the numerical simulations for some nominal choice of $\tau$ and $T$.
\end{remark}



\subsection{Hierarchical Beamforming Codebook}
\label{Codebook}
We adopt the hierarchical beamforming codebook $\mathcal{W}^S$ proposed in \cite{Alkhateeb2014} with S levels of beam patterns. The beams divide the space dyadically in a hierarchical manner such that the disjoint union of the beams in each level is the whole region of interest. The codebook is the set $\mathcal{W}^S = \cup_{l=1}^S \mathcal{W}_l$, where $\mathcal{W}_l$ is all the beam patterns in level $l$ whose main beam has a width $\frac{|\theta_{\text{max}} - \theta_{\text{min}}|}{2^l}$. More specifically, for each level $l$, $\mathcal{W}_l$ contains $2^l$ beamforming vectors which divide the sector $[\theta_{\text{max}},\theta_{\text{min}}]$ into $2^l$ directions, i.e.
\begin{equation}
    [\theta_{\text{max}},\theta_{\text{min}}] = \bigcup_{k=1}^{2^l} D_{l}^k,
\end{equation}
each associated with a certain range of AoA $D_{l}^k$. The beamforming vector $\*w(D_{l}^k)$ is designed such that the beamforming gain $|\*w(D_{l}^k)^H \*a(\phi)|$ is almost constant for an AoA $\phi \in D_{l}^k$ and almost zero for $\phi \notin D_{l}^k$. 

Note that $\mathcal{W}^S$ can be represented as a binary hierarchical tree, where each level-$l$ beam has two descendants in level $l+1$ such that each level-$l$ beam is the union of two disjoint beams, i.e., $D_{l}^k =  D_{l+1}^{2k}  \cup D_{l+1}^{2k-1}$. This binary tree hierarchy is illustrated in Fig.~\ref{fig:hie_beam} with the beam patterns of the first two levels of the codebook. Note that without loss of generality, the beamforming vectors in the codebook are assumed to have unit norm $\|\*w\|^2=1$. 
\begin{figure} 
    \centering
    \includegraphics[width = 0.48\textwidth]{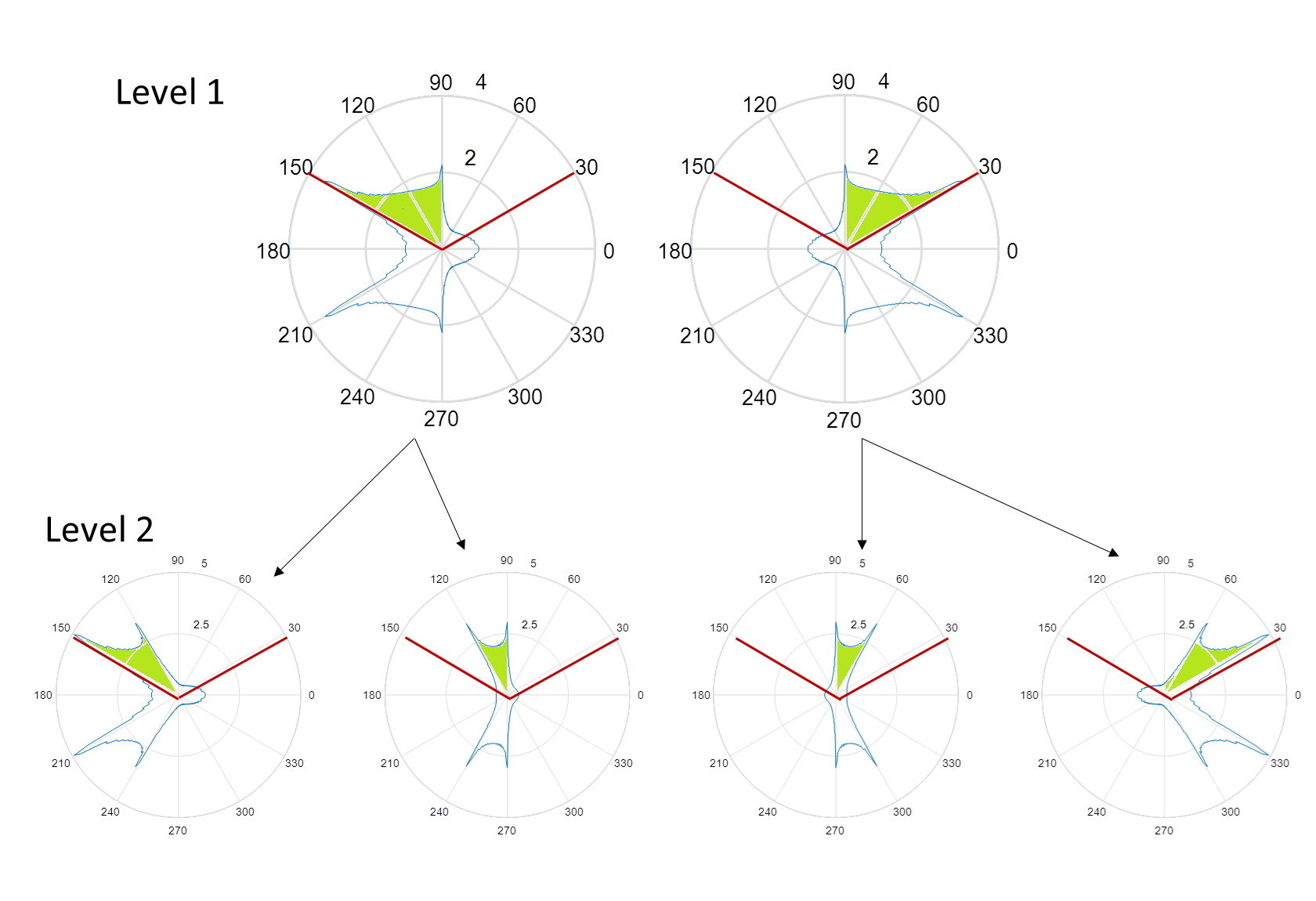}
    \caption{ The first 2 levels of hierarchical beamforming codebook with practical beam pattern formed by 64 antenna}
    \label{fig:hie_beam}
\end{figure}
\subsection{Hierarchical Posterior Matching}
\label{sect_hiePM}
In this section we propose a search mechanism based on the connection between initial access beamforming, noisy search \cite{Chiu2016}, active learning \cite{Naghshvar2015a}, and channel coding with feedback \cite{Lalitha2017}, with the caveat that the beamforming vector is constrained to the practically feasible beamforming codebook of \cite{Alkhateeb2014} as in set $\mathcal{W}^S$. Instead of using all past observations $\*w_{t+1} = \gamma(z_{1:t},\*w_{1:t})$, $hiePM$ selects $\*w_{t+1}$ based on the posterior of the AoA $\phi$ at time $t$, which is a sufficient statistic. We discretize the problem by assuming that the resolution $\frac{1}{\delta}$ is an integer and that the AoA $\phi$ is from 
\begin{equation}
\phi \in \{ \theta_1,...,\theta_{1\slash \delta} \},\  \theta_i = \theta_{\text{min}} + (i-1)\times \delta \times (\theta_{\text{max}}-\theta_{\text{min}}).
\end{equation}
Such discretization approaches the original problem of initial access as $\delta \rightarrow 0$. To support resolution $1 \slash \delta$, the corresponding size of the hierarchical beamforming codebook 
\begin{equation}
    S = \log_2(1\slash \delta)
\end{equation}
is used. With this discretization, the posterior probability distribution can be written as a $\frac{1}{\delta}$-dimensional vector $\.\pi(t)$, where the $i^{th}$ component is of the form
\begin{equation}
    \pi_i(t) := \prob{\phi = \theta_i| z_{1:t}, \*w_{1:t}}, \ i=1,2,..., \frac{1}{\delta}.
\end{equation}
The posterior probability of the AoA $\phi$ being within a certain range, say $D_l^k$, can be computed as
\begin{equation}
    \pi_{D_l^k}(t) := \sum_{\theta_i \in D_l^k} \pi_i(t).
\end{equation}

Now we are ready to present the proposed $hiePM$ algorithm. The proposed adaptive beamforming strategy, $hiePM$, chooses a beamforming vector at each time $t$ from the hierarchical beamforming codebook $\mathcal{W}^S$. The main idea of $hiePM$ is to select $\*w_{t+1}\in \mathcal{W}^S$ by examining the posterior probability $\pi_{D_l^k}(t)$ for all $l=1,2,...,S$ and $k=1,2,...,2^l$. Specifically, let
\begin{equation}
    l^*_t = \argmax_l \[\{ \max_k \pi_{D^k_l} \geq \frac{1}{2}  \]\},
\end{equation}
the proposed $hiePM$ algorithm selects a codeword at either level $l^*_t$ or $l^*_t+1$ according to Alg.~1 below. Given a snapshot of the posterior at time $t$, the selection rule is illustrated in Fig.~\ref{fig:hiePM}. The algorithm runs for a fixed length of time (fixed-length stopping) or until a certain error probability $\epsilon$ for resolution $1\slash \delta$ is achieved (variable-length stopping). The final choice of beamforming vector is determined by $\epsilon$ and $\delta$. The details of $hiePM$ are summarized in Algorithm 1 below.

\begin{figure}
    \centering
    \includegraphics[width = 0.4\textwidth]{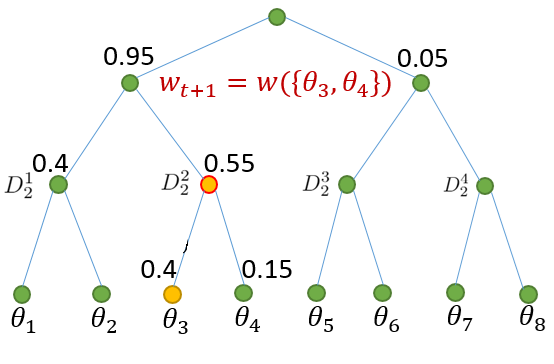}
    \caption{Illustration of the hierarchical posterior matching algorithm. In this example, we search down the tree hierarchy to levels 2 and 3, where level 3 has the first codeword that contains posterior lesser than half. Between level 2 and level 3, the codeword in level 2 of posterior $0.55$ is selected since it's closer to half ($0.55$ v.s. $0.4$). } 
    \label{fig:hiePM}
\end{figure}

\begin{algorithm}[h!tb] 
 \textbf{Input}: target resolution $\frac{1}{\delta}$, target error probability $\epsilon$, codebook $\mathcal{W}^S$ $(S = \log_2(1\slash \delta))$, \textit{stopping-criterion} (with stopping time $n$ if fixed-length), \textit{algorithm-type} \\
 \textbf{Output}: Estimate of the AoA $\hat{\phi} $\\
 \textbf{Initialization}: $\pi_i(0) = \delta$ for all $i=1,2,...,1\slash \delta$, \\
 \For{$t=1,2,...$ }{
    
    \# Codeword selection from $\mathcal{W}^S$:\\
    $k=0$\;
    \For{$l=1,2,...,S$}{
        \eIf{$\pi_{D_l^k}(t)>1\slash 2$}{
            \# select the larger descendent\\ 
            $l_t^* = l$\; 
            $k \leftarrow \argmax_{k'\in \{2k,2k-1\}} \.\pi_{D_{l+1}^{k'}}(t)$\;
            }{  
            \vspace{-4mm}
            \begin{gather} 
            \begin{aligned}
                &(l_{t+1},k_{t+1}) = \\
                &\argmin_{(l',k') \in \{(l_t^*,\lceil{\frac{k}{2}}\rceil),(l_t^*+1,k)\}} \[|\pi_{D^{k'}_{l'}}(t) - \frac{1}{2}\]|
            \end{aligned}
            \end{gather}
            break\;
        }
    }
    \setlength{\abovedisplayskip}{0pt} \setlength{\abovedisplayshortskip}{0pt}
    \# Codeword selection result \\[-2mm]
     \begin{gather}
         \*w_{t+1}  =  \*w(D_{l_{t+1}}^{k_{t+1}})
     \end{gather}\\
    \# Take next measurement \\[-2mm]
     \begin{gather} 
     \begin{aligned} \label{eq:measure}
         y_{t+1} &= \alpha \sqrt{P} \*w_{t+1}^H \*a(\phi) + \*w_{t+1}^H \*n_{t+1} \\
         z_{t+1} &= q(y_{t+1}) \\
     \end{aligned}
     \end{gather}\\
    \# Posterior update by Bayes' Rule   (Sec.~\ref{sec:posterior_update}) \\[-2mm]
     \begin{gather} 
     \label{eq:Bay}
         \.\pi(t+1) \leftarrow z_{t+1}, \.\pi(t) 
     \end{gather} \\
    \textit{case: stopping-criterion} = fixed length (FL)\\
    \If{ $t+1=n$}{
    break (to final beamforming);}
    
    \textit{case: stopping-criterion} = variable length (VL)\\
    \If{ $ \max_i  \pi_i(t+1) > 1-\epsilon$}{
    break (to final beamforming);}
 }
     \# Final beamforming vector design \\
    $\tau = t+1$  (length of the initial access phase)\\
    \textit{case: algorithm-type} = fixed resolution (FR)\\[-2mm]
         \begin{gather}\label{eq:Fres}
         (\hat{l}, \hat{k}) =  (S, \argmax_k \pi_{D^k_{S}} (\tau) )
     \end{gather}\\
    \textit{case: algorithm-type} = variable resolution (VR)\\[-2mm]
         \begin{gather}\label{eq:Vres}
         \begin{aligned}
         \hat{l} &= \begin{cases}
         1,  \ \max_k \pi_{D^{\hat{k}}_1} (\tau) < 1-\epsilon \\
         \max \{ l:  \max_k \pi_{D^{\hat{k}}_l} (\tau) \geq 1-\epsilon \}, \  o.w.
         \end{cases} \\
         \hat{k} &= \argmax_k \pi_{D^k_{\hat{l}}} (\tau) \\
         \end{aligned}
         \end{gather}\\[-2mm]
    $\hat{\*w} = \*w(D^{\hat{k}}_{\hat{l}})$
\caption{Hierarchical Posterior Matching}
\end{algorithm}


\begin{remark}
The $hiePM$ algorithm can be thought of as a noisy generalization of a bisection search where the posterior is used to create almost equally-probable search subsets subject to the codebook $\mathcal{W}^S$. Compared with the vanilla bisection method proposed in \cite{Alkhateeb2014}, $hiePM$ allows for significantly lower SNR search outcomes whose reliability are dealt with over time. This can also be viewed as water-filling in angular domain.
\end{remark}

\begin{remark}
See \cite{Lalitha2017,Shabara2017} for a detailed description of the connection between our beam search problem and a channel coding problem in data transmission. In this light, the vanilla noise-compensated bisection method of \cite{Alkhateeb2014} can be viewed as a repetition coding strategy which is known to have zero rate, while $hiePM$ can be viewed as a constrained (subject to hierarchical codebook $\mathcal{W}^S$) approximation to the capacity achieving posterior matching feedback coding scheme of \cite{Shayevitz2011}.
\end{remark}

\subsection{Posterior Update} \label{sec:posterior_update}
Let $\gamma_h: \.\pi(t) \rightarrow \mathcal{W}^S$ be the $hiePM$ sequential beamforming design given in Algorithm 1, i.e. let $\*w_{t+1} = \gamma_h(\.\pi(t))$. By the measurement model in (\ref{eq:measure}), the posterior update in Algorithm~1 in general can be written as
\begin{equation} \label{eq:Bayes}
\pi_i(t+1) = \frac{ \pi_i(t) f(z_{t+1}|\phi=\theta_i, \*w_{t+1} = \gamma_h(\.\pi(t)) )  }{  \sum_{j\neq i}\pi_j(t) f(z_{t+1}|\phi=\theta_j, \*w_{t+1} = \gamma_h(\.\pi(t)) )  },
\end{equation}
where $f(z_{t+1}|\phi=\theta_i, \*w_{t+1} = \gamma_h(\.\pi(t)) )$ is the conditional distribution of $z_{t+1}$ and depends on the function $q(\cdot)$ as well as the channel state information (e.g. the fading coefficient $\alpha$) known to the BS. Here, we give a few examples:


\begin{enumerate}
    \item \textit{Full measurement} $z_t=y_t$: 
    
    In the case of static channel (zero mobility), we may assume that the fading coefficient $\alpha$ is known to the BS. With a full measurement $z_t=y_t$, the conditional distribution of $z_t$ is a complex Gaussian, written as
    \begin{equation}
    \begin{aligned}
        f(z_{t+1} & |\phi=\theta_i, \*w_{t+1} = \gamma_h(\.\pi(t)) )  \\
        & = \mathcal{CN}(z_{t+1}; \alpha \sqrt{P} \*w^H_{t+1}\*a(\theta_i)  ,\sigma^2 ).
    \end{aligned}
    \end{equation}
    
     In the case where $\alpha$ is not known, the algorithm is assumed to use an estimate $\hat{\alpha}$:
     \begin{equation} 
    \begin{aligned}
        f(z_{t+1} & |\phi=\theta_i, \*w_{t+1} = \gamma_h(\.\pi(t)) )  \\
        & \approx \mathcal{CN}(z_{t+1}; \hat{\alpha} \sqrt{P} \*w^H_{t+1}\*a(\theta_i)  ,\sigma^2 )
    \end{aligned}
    \end{equation}
    for the posterior update. 
    
    
    \item \textit{1-bit measurement} $z_t = \mathds{1} \[(   |y_t| > v_t \])$:
    
    For practical high speed ADC implementation, we consider an extreme quantization function of a 1-bit \cite{Choi2016,Ding2018,Ho19} measurement model $z_t = \mathds{1} \[(   |y_t| > v_t \])$, where at each time instance $t$ the BS only has 1-bit of information indicating whether or not the received power passes the threshold $v_t$. Equivalently, we can write the measurement model as
    \begin{equation} \label{eq:zt_bern}
        z_t  = \mathds{1}(\phi\in D_{l_t}^{k_t} ) \oplus u_{t}(\phi), \ \ u_{t}(\phi) \sim \text{Bern}(p_t(\phi)),
    \end{equation}
where $u_{t}(\phi)$ is the equivalent Bernoulli noise with flipping probability $p_t(\phi)$, and $\oplus$ denotes the exclusive OR operator. The setting of the threshold $v_t$ and the corresponding flipping probability $p_{t}(\phi)$ is given in Lemma~\ref{lemma:opt_threshold}. In this case, the conditional distribution of $z_t$ can therefore be written as
\begin{equation}
    \begin{aligned}
        f(z_{t+1} & |\phi=\theta_i, \*w_{t+1} = \gamma_h(\.\pi(t)) )  \\
        & = \text{Bern}(z_{t+1} \oplus \mathds{1}(\theta_i\in D_{l_t}^{k_t} ); p_{t+1}(\theta_i) ). 
    \end{aligned}
\end{equation}

\end{enumerate}

\section{Analysis} \label{sec:hie_main}

\label{analysis}

Our analysis for $hiePM$ focuses on the variable-length stopping criteria with fixed resolution $\frac{1}{\delta}$ and a fixed target error probability $\epsilon$, where by Algorithm 1 the variable-length stopping time $\tau_{\epsilon,\delta}$ can be written as
\begin{equation}
    \tau_{\epsilon,\delta} = \min\{ t: 1-\max_i \pi_i(t) \leq \epsilon \}.
\end{equation}
We will also focus on the 1-bit measurement model described in Sec.~\ref{sec:posterior_update}. Furthermore, we make the assumption of an ideal hierarchical beamforming codebook for the analysis:
\begin{assumption} \label{asm:idealbeam}
The beam formed by the beamforming vector $\*w(D_{l}^k) \in \mathcal{W}^S$ has constant beamforming power gain for any signal of AoA $\phi\in D_{l}^k$ and rejects any signal outside of $D_{l}^k$, $i.e.$
\begin{equation} 
    |\*w(D_{l}^k)^H \*a(\phi)| = \begin{cases}  G_l, & \text{if } \phi \in D_{l}^k\\
    0, & \text{if } \phi \notin D_{l}^k
    \end{cases}.
\end{equation}
\end{assumption}
\begin{remark}
Assumption~\ref{asm:idealbeam} is mainly for better presentation. This assumption is approximately true when we have massive number of antennas $N\gg \frac{1}{\delta}$. The deterioration of performance due to the imperfect beamforming, such as that resulting from sidelobe leakage, is not the focus of our analysis. 
In our numerical simulations, however, we will remove this assumption by investigating the performance of the algorithms under the actual beamforming pattern with finite number of antennas.
\end{remark}

Under the 1-bit measurement $z_t = \mathds{1} \[(   |y_t| > v_t \])$ with Assumption~\ref{asm:idealbeam} and the optimal choice of the threshold $v_t$ in Lemma~\ref{lemma:opt_threshold}, the flipping probability $p_t(\phi)$ of the Bernoulli noise in (\ref{eq:zt_bern}) is independent of the AoA $\phi$ and only depends on the beamforming codeword level $l_t$ selected at time $t$. In particular, we have
\begin{equation}
    p_t(\phi) = p[l_t] := \int_0^{v_t} \text{Rice}(x; P G_l^2 ,\sigma^2)\ dx, 
\end{equation}
where $p[l] > p[l+1]$ and $ p[l] \rightarrow 0$ since $G_l < G_{l+1}$ and $G_l \rightarrow \infty$ as $l\rightarrow \infty$ (assuming unlimited number of antenna) by the design of the codebook. Furthermore, we assume that $\log_2(1\slash \delta)$ is an integer. Now we are ready to give an upper bound of the expected stopping time $\tau_{\epsilon,\delta}$ with resolution $\frac{1}{\delta}$ and outage probability $\epsilon$ of the proposed $hiePM$ sequential beam search algorithm:
\begin{theorem} \label{thm}
By using codebook $\mathcal{W}^S$ with $S = \log_2 (1\slash\delta)$ levels and assuming the perfect beamforming assumption (Assumption~\ref{asm:idealbeam}) and the 1-bit measurement model $z_t = \mathds{1} \[(   |y_t| > v_t \])$ with the optimal choice of $v_t$ in Lemma~\ref{lemma:opt_threshold}, the expected stopping time of $hiePM$, of resolution $\frac{1}{\delta}$ and error probability $\epsilon$, can be upper bounded by
\begin{equation}
\begin{aligned}
    \expect{\tau_{\epsilon,\delta}} \leq  \frac{\log (1\slash \delta)}{R_h} + \frac{\log (1\slash \epsilon)}{E_h}   + o( \log(\frac{1}{\delta \epsilon})),
\end{aligned}
\end{equation}
where $E_h = C_1(p[ \log_2 (1\slash\delta) ])$, $R_h = I(1\slash 3;p[l'])$ with $l'=\big\lfloor \frac{K_0 \lceil \log\log \frac{1}{\delta } \rceil}{\log2} -1 \big\rfloor$ and $K_0$ is a constant defined in Lemma~\ref{lemma:EJS_lq}.
\end{theorem}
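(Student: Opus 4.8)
The plan is to adapt the two-phase (Burnashev-type) analysis of variable-length feedback coding to the codebook-constrained active-learning setting, using the Extrinsic Jensen--Shannon (EJS) divergence to lower bound the per-step information gain. The target bound has exactly the shape of the Burnashev expected-decoding-time bound, with $\log(1/\delta)$ playing the role of $\log M$ (the number of resolution bins), the constrained rate $R_h$ playing the role of capacity, and $E_h = C_1(p[\log_2(1/\delta)])$ playing the role of the Burnashev reliability exponent. This correspondence dictates the whole argument: drift at a capacity-like rate while the posterior is still spread, then drift at the maximal per-step log-likelihood rate once it has concentrated on the true bin.

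First I would fix the true AoA $\phi=\theta$ (the bound being established for each, hence worst-case, $\theta$) and track the log-likelihood confidence $U(t)=\log\frac{\pi_\theta(t)}{1-\pi_\theta(t)}$, noting that the variable-length rule $1-\max_i\pi_i(t)\le\epsilon$ forces $U(\tau_{\epsilon,\delta})\gtrsim\log\frac{1-\epsilon}{\epsilon}=\log(1/\epsilon)(1+o(1))$ once the MAP coordinate is the true one. The core estimate is a drift inequality: under $hiePM$ the posterior-averaged one-step increase of $U$ is lower bounded by the EJS divergence of the current posterior evaluated at the selected beam,
\begin{equation}
\expect{U(t+1)-U(t) | \.\pi(t)} \;\ge\; \mathrm{EJS}\big(\.\pi(t),\gamma_h(\.\pi(t))\big),
\end{equation}
which is precisely the quantity that $hiePM$ is designed to make large. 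Lemma~\ref{lemma:EJS_lq} enters here, furnishing a lower bound on the right-hand side in terms of the level $l_t$ of the chosen codeword and its flipping probability $p[l_t]$.

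The heart of the argument is a two-phase split of the trajectory. In the \emph{acquisition phase} -- while $\max_i\pi_i(t)$ is bounded away from $1$ -- the hierarchical codebook $\mathcal{W}^S$ forces $hiePM$ to query a dyadic set whose posterior mass lies in $[1/3,2/3]$ (the worst case being a $1/3$--$2/3$ split that cannot be bisected within the tree), so each measurement behaves like a $\mathrm{BSC}(p[l_t])$ query with an input fraction no closer to balanced than $1/3$; the per-step drift is therefore at least $I(1/3;p[l_t])$, and maximizing mutual information at the balanced input $1/2$ is replaced by the constrained value $1/3$. The $\lceil\log\log(1/\delta)\rceil$-type identification of the level $l'$ captures how deep the search has typically descended during this phase and yields the uniform lower bound $R_h=I(1/3;p[l'])$. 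In the \emph{confirmation phase} -- once the MAP mass concentrates on the true bin at the finest level $S=\log_2(1/\delta)$ -- the EJS drift reduces to a binary hypothesis test of $\theta$ against its nearest competitor, whose expected log-likelihood increment is the Chernoff-type exponent $C_1(p[S])=E_h$. Applying an optional-stopping / Wald argument to $U(t)$ minus its accumulated drift on each phase converts these into $\expect{\tau_1}\le\frac{\log(1/\delta)}{R_h}(1+o(1))$ and $\expect{\tau_2}\le\frac{\log(1/\epsilon)}{E_h}(1+o(1))$; summing gives the claim.

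The main obstacle is the constrained EJS lower bound of Lemma~\ref{lemma:EJS_lq}: one must show that restricting queries to $\mathcal{W}^S$ rather than allowing an arbitrary near-bisection of the posterior costs only the replacement of the ideal input $1/2$ by the worst-case dyadic split $1/3$, and that the governing flipping probability can be pinned to a single level $l'$ throughout the bulk of the search. Pinning $l'$ requires controlling how the selected level $l_t$ evolves -- hence the $\lceil\log\log(1/\delta)\rceil$ bookkeeping -- and showing that the time spent at shallow (high-noise) levels, the boundary overshoot at $\tau_{\epsilon,\delta}$ (whose increments can be large since $p[l]\to 0$ at deep levels), and the handoff between the two phases are all absorbed into the $o(\log\frac{1}{\delta\epsilon})$ remainder rather than perturbing the leading constants $R_h$ and $E_h$.
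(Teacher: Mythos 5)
Your framework matches the paper's skeleton: both arguments rest on EJS drift lower bounds in two regimes (a rate-type bound $I(1/3;p[l_{t+1}])$ while the posterior is spread, and an exponent-type bound $\tilde{\pi}\,C_1(p[\log_2(1/\delta)])$ once $\max_i\pi_i(t)$ is large), fed into the variable-length stopping theorem of Naghshvar et al.\ (Fact~\ref{fact:EJSthm}); your ``two-phase optional-stopping/Wald argument'' is exactly what that cited fact packages, and your confirmation-phase analysis correctly recovers $E_h$. However, there is a genuine gap at precisely the point you yourself flag as ``the main obstacle'': the replacement of $p[1]$ by $p[l']$ in the rate. Applying the two drift bounds directly yields only
\begin{equation}
\expect{\tau_{\epsilon,\delta}} \leq \frac{\log(1/\delta)}{I(1/3;p[1])} + \frac{\log(1/\epsilon)}{C_1(p[\log_2(1/\delta)])} + o\!\left(\log\frac{1}{\delta\epsilon}\right),
\end{equation}
because nothing in your argument prevents the selected level $l_{t+1}$ from remaining shallow; asserting that the $\lceil\log\log(1/\delta)\rceil$ bookkeeping ``captures how deep the search has typically descended'' restates what must be proved rather than proving it.

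The paper closes this gap with a concrete mechanism your proposal does not supply. It introduces the coarsened (nested) posterior $\pi^{\{l\}}(t)$ at level $l$ and its average log-likelihood $U^{\{l\}}(t)$ (Appendix~\ref{sec:nested_posterior}), shows via Lemma~\ref{lemma:EJS_lq} that $U^{\{l\}}(t) - K_0 t$ is a submartingale with bounded differences --- note this is the actual role of that lemma, which you instead describe as bounding the EJS in terms of $p[l_t]$ (that is Lemma~\ref{lemma:EJSori}) --- and then applies Azuma's inequality to obtain the exponential tail $\prob{l_{t+1}\leq l} \leq k_0 e^{-E_0 t}$ (Lemma~\ref{lemma:lq}). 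The expected stopping time is then decomposed, via the total expectation theorem and a union bound, over the ``bad'' events $E_t = \{l_{t+1} < l'\}$ and the ``good'' event $F_n^C$; each term is bounded by restarting the Fact~\ref{fact:EJSthm} bound at time $t$ (using time-homogeneity of the Markov chain $\.\pi(t)$), with the bad-event terms weighted by $\prob{E_t}$ and the good-event term enjoying the improved rate $I(1/3;p[l'])$. Finally the choice $n = \lceil\log\log\frac{1}{\delta\epsilon}\rceil$ makes both the exponentially-weighted bad-event contributions and the additive $n$ fall into the $o(\log\frac{1}{\delta\epsilon})$ remainder, and fixes the value of $l'$ appearing in the theorem. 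Without this concentration-plus-restart argument (or an equivalent), your proposal establishes only the weaker bound with $R_h = I(1/3;p[1])$, not the stated theorem.
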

\begin{proof}
See Appendix~\ref{sec:analysis-proof}
\end{proof}

\begin{corollary} \label{cor_eps}
Let, $\expect{\tau_{\epsilon,\delta}}=n$. For all values of $\delta$ such that $\delta \leq 2^{-nR_h}$, the error probability of $hiePM$ can be approximately upper bounded by
\begin{equation}\label{er}
    \prob{|\hat{\phi}-\phi| > \delta} \lessapprox \exp\[( - n E_h  \[(  1 - \frac{\log(1\slash \delta)}{nR_h} \])  \])
\end{equation}
when $\delta$ is small enough. 
\end{corollary}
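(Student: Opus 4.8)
The plan is to obtain Corollary~\ref{cor_eps} as a direct algebraic inversion of the rate--reliability bound in Theorem~\ref{thm}: instead of bounding the search time needed to reach a prescribed pair $(\delta,\epsilon)$, I would fix the time budget $n$ together with the target resolution $1/\delta$ and solve for the achievable error probability $\epsilon$. Concretely, I would set $\expect{\tau_{\epsilon,\delta}}=n$ and invoke Theorem~\ref{thm} to write
\begin{equation}
n \le \frac{\log(1/\delta)}{R_h} + \frac{\log(1/\epsilon)}{E_h} + o\!\left(\log\tfrac{1}{\delta\epsilon}\right).
\end{equation}

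Next I would isolate the reliability term. Discarding the $o(\log\frac{1}{\delta\epsilon})$ contribution --- which is precisely what the $\lessapprox$ and the ``$\delta$ small enough'' qualifier in the statement encode --- and rearranging gives $\log(1/\epsilon)\gtrsim E_h\big(n-\log(1/\delta)/R_h\big)$, i.e.
\begin{equation}
\log(1/\epsilon) \gtrsim n E_h\!\left(1 - \frac{\log(1/\delta)}{n R_h}\right).
\end{equation}
Exponentiating yields $\epsilon \lessapprox \exp\!\big(-nE_h(1-\log(1/\delta)/(nR_h))\big)$. I would then close the argument by recalling that under the variable-length rule $\tau_{\epsilon,\delta}=\min\{t:1-\max_i\pi_i(t)\le\epsilon\}$ the posterior of the MAP estimate exceeds $1-\epsilon$ at stopping, so the conditional (hence unconditional) probability of selecting an incorrect cell is at most $\epsilon$; at resolution $1/\delta$ this is exactly $\prob{|\hat\phi-\phi|>\delta}\le\epsilon$, and substituting the bound on $\epsilon$ gives the claimed inequality~(\ref{er}). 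The hypothesis on $\delta$ plays the role of keeping the exponent non-negative: the bound is informative only while $\log(1/\delta)\le nR_h$, equivalently while the induced acquisition rate $\log(1/\delta)/n$ stays below the achievable rate $R_h$, recovering the familiar rate--reliability trade-off in which the exponent $E_h(1-R/R_h)$ vanishes as the rate approaches $R_h$. (I would double-check the base-$2$ convention here, since $\delta\le 2^{-nR_h}$ reads as $\log_2(1/\delta)\ge nR_h$, whereas a genuine decay requires $\log(1/\delta)\le nR_h$, so the operative regime is $\delta\gtrsim 2^{-nR_h}$.)

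The hard part will not be the algebra but justifying the inversion rigorously: Theorem~\ref{thm} is only an upper bound carrying an additive $o(\log\frac{1}{\delta\epsilon})$ slack, so turning it into a lower bound on $\log(1/\epsilon)$ (and hence an upper bound on $\epsilon$) is legitimate only in an asymptotic sense. The clean way I would handle this is to fix the rate $R=\log(1/\delta)/n<R_h$ and let $\delta\to 0$ (with $n\to\infty$ accordingly), showing that the $o(\cdot)$ term is negligible relative to the two leading terms so that the approximation becomes tight --- which is exactly the content of ``$\lessapprox$'' and ``when $\delta$ is small enough.'' A secondary point to verify is that $R_h$ and $E_h$, which depend on $\delta$ through $l'$ and $p[\log_2(1/\delta)]$, stabilize appropriately in this limit, so that treating them as constants during the inversion is valid.
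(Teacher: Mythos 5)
Your proposal is correct and matches the paper's (implicit) argument: the paper states this corollary without a separate proof precisely because it is the algebraic inversion of Theorem~\ref{thm} you carry out --- drop the $o(\log\frac{1}{\delta\epsilon})$ slack, solve for $\log(1/\epsilon)$, and use the fact that the variable-length stopping rule guarantees the achieved error probability is at most the target $\epsilon$. Your parenthetical catch is also right: as printed, the condition $\delta \leq 2^{-nR_h}$ forces $\log_2(1/\delta) \geq nR_h$ and hence a non-positive exponent (a trivial bound), so the operative regime --- and the one used when the corollary is plotted in Fig.~\ref{fig:upperbound} with $1/\delta=128$, $n=28$ --- is $\log_2(1/\delta) \leq nR_h$, i.e.\ $\delta \geq 2^{-nR_h}$.
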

\begin{corollary} \label{cor}
Under the same conditions and by Theorem~\ref{thm}, $hiePM$ achieves acquisition rate  
\begin{equation} \label{eq:hiePMrate}
\begin{aligned}
    \lim_{\delta \rightarrow 0}  \frac{\log (1 \slash \delta) }{\expect{\tau_{\epsilon,\delta}}} &\geq  \lim_{\delta \rightarrow 0} R_h \\
    &=\lim_{\delta \rightarrow 0} I(1\slash 3;p^*(\delta,\epsilon)) = 1
\end{aligned}
\end{equation}
for arbitrarily small error $\epsilon>0$, and error exponent
\begin{equation} \label{eq:hiePMexp}
  \lim_{\epsilon \rightarrow 0}  \frac{\log (1 \slash \epsilon) }{\expect{\tau_{\epsilon,\delta}}} \geq  \lim_{\epsilon \rightarrow 0} E_h = C_1(p[ \log_2 (1\slash\delta) ])
\end{equation}
for any $\delta>0$.
\end{corollary}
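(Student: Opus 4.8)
The plan is to treat both limits as direct consequences of the non-asymptotic bound already established in Theorem~\ref{thm}, namely $\expect{\tau_{\epsilon,\delta}} \leq \frac{\log(1/\delta)}{R_h} + \frac{\log(1/\epsilon)}{E_h} + o(\log(\frac{1}{\delta\epsilon}))$, and simply to substitute this upper bound into the definitions of the acquisition rate and the error exponent, then evaluate the two limits separately. Since Theorem~\ref{thm} upper bounds the denominator $\expect{\tau_{\epsilon,\delta}}$, each substitution produces a lower bound on the corresponding ratio, which is exactly the direction of the inequalities claimed in the corollary.

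For the acquisition rate, I would divide the Theorem~\ref{thm} bound through by $\log(1/\delta)$ and invert, giving
\begin{equation}
\frac{\log(1/\delta)}{\expect{\tau_{\epsilon,\delta}}} \geq \left( \frac{1}{R_h} + \frac{1}{E_h}\frac{\log(1/\epsilon)}{\log(1/\delta)} + \frac{o(\log\frac{1}{\delta\epsilon})}{\log(1/\delta)} \right)^{-1}.
\end{equation}
Holding $\epsilon>0$ fixed and letting $\delta\to0$, the second term vanishes because $\log(1/\epsilon)$ is constant while $\log(1/\delta)\to\infty$, and the third vanishes because $\log\frac{1}{\delta\epsilon}\sim\log(1/\delta)$ forces the ratio to be $o(1)$. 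Hence $\liminf_{\delta\to0}\frac{\log(1/\delta)}{\expect{\tau_{\epsilon,\delta}}} \geq \lim_{\delta\to0} R_h$, and it remains only to evaluate $\lim_{\delta\to0} R_h$.

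The crux, and what I expect to be the main obstacle, is evaluating $\lim_{\delta\to0} R_h = \lim_{\delta\to0} I(1/3; p[l'])$ with $l' = \big\lfloor \frac{K_0\lceil\log\log\frac{1}{\delta}\rceil}{\log 2} - 1\big\rfloor$, which is where the asymptotic matching to noiseless bisection is actually cashed out. The argument has two ingredients. First, since $K_0$ is the fixed constant of Lemma~\ref{lemma:EJS_lq} and $\log\log\frac{1}{\delta}\to\infty$ as $\delta\to0$, the level index $l'\to\infty$; by the codebook design the flipping probabilities satisfy $p[l]>p[l+1]$ with $p[l]\to0$ (because $G_l\to\infty$), so $p[l']\to0$ along this diverging sequence of levels. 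Second, $p\mapsto I(1/3;p)$ is continuous, so the limit is $I(1/3;0)$, the mutual information through a noiseless binary channel; identifying $p^*(\delta,\epsilon)$ in the corollary with $p[l']$ and showing that this vanishing-noise limit attains the noiseless-bisection rate $1$ is precisely the content of the rate claim. The delicate point is making the interchange of limits rigorous, i.e. confirming that $l'$ genuinely diverges and that the noise-free limit of the hierarchically constrained mutual information recovers the full bisection rate.

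The error-exponent claim is the easier of the two and uses the same substitution in the complementary regime. Dividing the Theorem~\ref{thm} bound by $\log(1/\epsilon)$ and inverting, and now holding $\delta>0$ fixed while letting $\epsilon\to0$, the term $\frac{\log(1/\delta)/R_h}{\log(1/\epsilon)}\to0$ and the $o(\cdot)$ term again contributes $o(1)$ since $\log\frac{1}{\delta\epsilon}\sim\log(1/\epsilon)$. This yields $\liminf_{\epsilon\to0}\frac{\log(1/\epsilon)}{\expect{\tau_{\epsilon,\delta}}}\geq E_h$. Because $E_h = C_1(p[\log_2(1/\delta)])$ depends only on $\delta$ and not on $\epsilon$, the limit is immediate and equals $C_1(p[\log_2(1/\delta)])$ for every fixed $\delta>0$, as stated.
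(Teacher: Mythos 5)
Your proposal is correct and follows essentially the same route as the paper: the paper offers no separate proof of Corollary~\ref{cor}, treating it as an immediate consequence of Theorem~\ref{thm} obtained by dividing the bound by $\log(1/\delta)$ (resp.\ $\log(1/\epsilon)$) with the other parameter held fixed and taking the indicated limit, exactly as you do, with your observations that $l'\to\infty$, $p[l']\to 0$, and continuity of $p\mapsto I(1/3;p)$ filling in the steps the paper leaves implicit. The one step you defer --- that the vanishing-noise limit $I(1/3;0)$ equals the noiseless-bisection rate $1$ --- is likewise asserted rather than derived in the paper; note that under the paper's own definition $I(1/3;0)=H(1/3)\approx 0.918$ bits, so the final ``$=1$'' is really the ``up to a constant factor'' matching announced in the abstract rather than an exact equality, a discrepancy inherited from the paper's statement and not a flaw introduced by your argument.
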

\begin{remark}
The integer assumption of $\log_2(1\slash \delta)$ is for notational simplicity. If the desired resolution $1\slash \delta$ is not of power of 2, one can simply take a higher resolution $ 1\slash \delta' = 2^{\lceil \log_2(1\slash \delta) \rceil} $. The corresponding upper bound in Theorem~\ref{thm} can be written accordingly and the conclusion in Corollary~\ref{cor} remains true.
\end{remark}

\begin{remark}
\label{remark:compare_rate}
The rate of one in equation (\ref{eq:hiePMrate}) implies that $hiePM$ performs asymptotically ($\delta \rightarrow 0$) in the same manner as a \underline{\textit{noiseless}} bisection search which is the optimal usage of the hierarchical beamforming codebook $\mathcal{W}^S$. The asymptotically noiseless behavior is due to the facts that $hiePM$ shrinks the AoA $D^k_l$ quickly, and that together with Assumption~\ref{asm:idealbeam} an unlimited number of antennas allow the beamforming gain $|\*w(D_{l}^k)^H \*a(\phi)|^2 = \frac{\pi}{|D^k_l|}\rightarrow \infty $ as $l\rightarrow \infty$. Compared with other beam alignment algorithms, non-adaptive random coding based strategies \cite{Abari2016} are not able to shrink the AoA region of the search beam. Therefore, the corresponding acquisition rate of \cite{Abari2016} rate is strictly lesser than 1. On the other hand, the adaptive noisy vanilla bisection algorithm in \cite{Alkhateeb2014} has rate zero even though the AoA region of the search beam shrinks over time. This is due to the fact that the noisy bisection of \cite{Alkhateeb2014}, in effect, employs repetition coding which has rate zero even with feedback (adaptivity). 
\end{remark}

\begin{remark}
To further compare our theoretical result of $hiePM$ with prior works, we plot Corollary \ref{cor} together with error probability upper bounds of \cite{Alkhateeb2014} and \cite{Abari2016} in Fig.~\ref{fig:upperbound} with $\expect{\tau}=28$, $1\slash\delta = 128$ and $|\theta_{\text{max}}-\theta_{\text{min}}| = 120^o$ and the ideal beamforming assumption (Assumption \ref{asm:idealbeam}). For the bisection algorithm of \cite{Alkhateeb2014}, we take the upper bound from the author's analysis for equal power allocation with fixed fading coefficient $\alpha=1$. While for the random hashing of \cite{Abari2016}, we take an optimization of the number of directions over Gallager's random coding bound of BSC as
\begin{equation}
    P_e \leq \min_{q} \exp\[( - 28 \times  E_{RC}(q) \]),
    \label{eq:rand_bound}
\end{equation}
where $ E_{RC}(q) = \max_{0\leq \rho\leq 1} \[( E_0(\rho,q) - \rho \times \frac{\log_2(128)}{28} \])$ and
\begin{equation}
\begin{aligned}
   E_0(\rho,q) = -& \log  \Big (\[( q (p_q)^{\frac{1}{1+\rho}} + (1-q) (1-p_q)^{\frac{1}{1+\rho}}   \])^{1+\rho}\\ 
   &+ \[( q (1-p_q)^{\frac{1}{1+\rho}} + (1-q) (p_q)^{\frac{1}{1+\rho}} )  \])^{1+\rho}\Big)
\end{aligned}
\end{equation}
with
\begin{equation}
    p(q) := \int_0^{v_t} \text{Rice}(x; P \frac{3}{2q} ,\sigma^2)\ dx, 
\end{equation}
where again $v_t$ is optimally chosen according to Lemma~\ref{lemma:opt_threshold}. The illustration of Corollary~\ref{cor} in Fig.~\ref{fig:upperbound} predicts the superior performance of $hiePM$ over the prior works \cite{Alkhateeb2014} and \cite{Abari2016}. We note that for these upper bounds, $hiePM$ and random hashing of \cite{Abari2016} assume a 1-bit quantizer, whereas the bisection method of \cite{Alkhateeb2014} is favorably given the unquantized amplitude information. We will further show in numerical simulation (Fig.~\ref{fig:sim-errorall}) that with practical beam patterns and unquantized measurements, the actual performance of $hiePM$ is not only indeed better than the prior works, but in fact achieves a significantly smaller error probability than our theoretical upper bound. Furthermore, we will see that the non-adaptive random hashing based method in \cite{Abari2016} in fact outperforms the adaptive bisection in \cite{Alkhateeb2014} due to the lack of good coding in \cite{Alkhateeb2014}.
\end{remark}

\begin{figure}
    \centering
    \includegraphics[width = 0.48\textwidth]{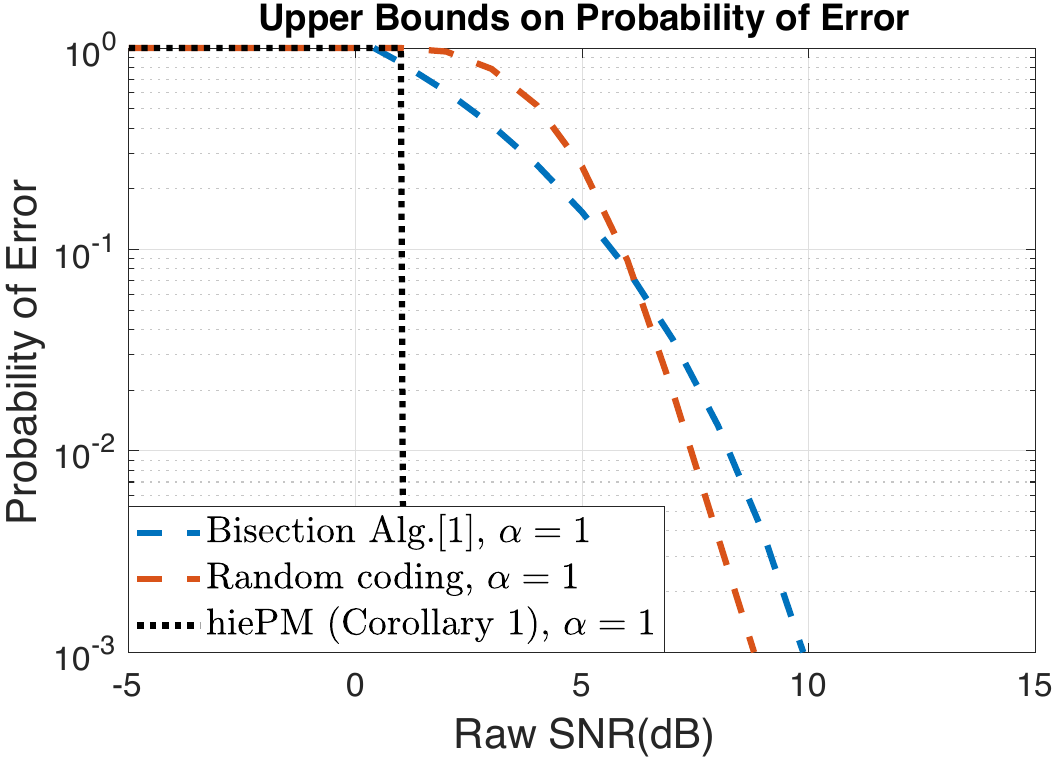}
    \caption{Comparison of the theoretical upper bounds on error probability between $hiePM$, random coding, and the bisection algorithm of \cite{Alkhateeb2014} as a function of raw SNR $P/\sigma^2$. The upper bound on $hiePM$ is given by Corollary~\ref{cor_eps}. While the upper bound on random coding is given by Gallager's bound as in~(\ref{eq:rand_bound}), and the upper bound on the bisection algorithm is provided by \cite{Alkhateeb2014}. } 
    \label{fig:upperbound}
\end{figure}

\section{Numerical Results}
\label{sec:num}

In this section, we compare the performance of our proposed \textit{hiePM} algorithms against the bisection algorithm of \cite{Alkhateeb2014} and an optimized random-code-based strategy, which is taken as an upper bound on the performance of the random hash-based solution of \cite{Abari2016}. Before we proceed with this performance analysis, however, we first provide a summary of the simulation setup and parameters.

\subsection{Simulation Setup and Parameters}
We use the hybrid analog/digital system architecture described in Sec.~\ref{System}, where the BS has $N = 64$ antenna elements in a uniform linear array with antenna spacing $\frac{\lambda}{2}$, and the user has a single (virtual) antenna. Furthermore, due to the use of orthogonal spreading sequences, we focus on the single user case $K=1$. The channel consists of a single path with fast fading coefficient $\alpha$. 
{The rule-of-thumb \cite{Rappaport2002} estimate of channel coherence time given by
\begin{equation}
    T_c  \approx \frac{0.432}{f_m} = \frac{0.432 c}{f_c v},
\end{equation}
where $c$ is the speed light, $f_c$ is the carrier frequency, and $v$ is the user speed. So even for mmWave communication with $73$GHz, at walking speed ($< 3$ km/hour) the coherence time is
\begin{equation}
T_c = \frac{0.432 \times 3\times 10^9 \text{ (m/s) } }{73\times 10^9  \text{ (Hz) }  \times 3 \text{ (km/hour) } } \approx 8.127 \text{ milliseconds}.
\end{equation}
Note that, additionally, narrow beamforming and the existence of a dominant sub-path (e.g. Line-of-Sight) can both increase the coherence time significantly \cite{Va2017}. Therefore, in subsection~\ref{results:static_alpha} we assume that the fading coefficient $\alpha$ is static during the entire initial access duration of $2$ milliseconds (ms).
} We consider both the case when the fading coefficient $\alpha$ is known exactly $\hat{\alpha}=\alpha$, and the case when it is estimated with the estimation inaccuracy modelled as $\hat{\alpha} \sim \mathcal{CN}({\alpha},\sigma_{\alpha}^2)$. 
{In subsection~\ref{results_timevarying_alpha} we further study the robustness of $hiePM$ with a static estimate of the time varying fading coefficient $\alpha_t$ of a Rician AR-1 model with a coherence time corresponding to higher user mobility.} Finally, we consider learning the AoA with an angular resolution of $1\slash \delta=128$, and an (expected) stopping time of $E[\tau]=28$, i.e., with $E[\tau]$ selections of beamforming vectors, hence, samples.

To provide a sense for the above normalized parameters, let us consider some candidate PHY layer solutions. In particular, when using the 5G new radio Physical Random Access Channel (PRACH) format B4 \cite{Lin20185GNR}, the $E[\tau]=28$ samples translate to less than 2 ms acquisition time for sub-1-degree angular resolution within a $[0\degree, 120\degree]$ sector. {We present our results as a function of raw SNR $\frac{P}{\sigma^2}$ to get a sense for reasonable values of SNR. In Fig.~\ref{fig:SNR-dist} we compute and illustrate the expected distance at which a target raw SNR is obtained.} We consider a case under the 3GPP TR 38.901 UMi LOS pathloss channel model (summarized in \cite{Rappaport20185GNRmodels}), with 23 dBm maximum user power, -174 dBm/Hz thermal noise density, 5 dB receiver noise figure at BS, with a bandwidth of 100 MHz. {As seen in Fig.~\ref{fig:SNR-dist}, one can argue that } given our selection of this PHY layer and parameters, the practical raw SNR regime of interest is within $-15$ dB to $10$ dB. 


\begin{figure}[ht]
    \centering
    \includegraphics[width = 0.45\textwidth]{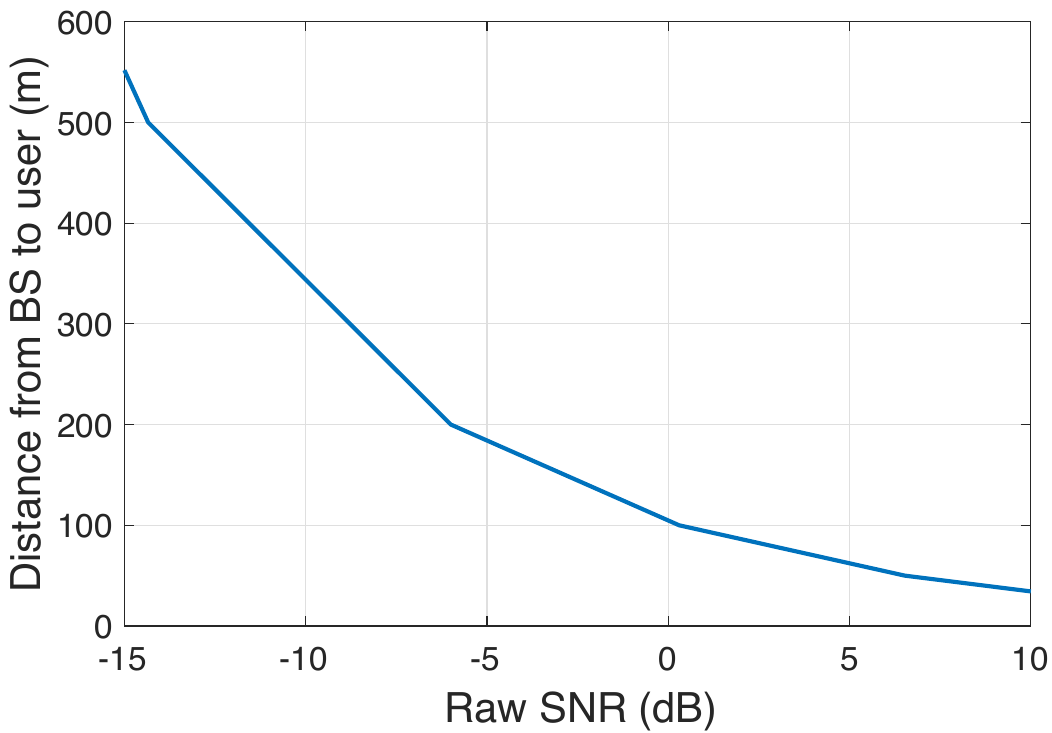}
    \caption{Relationship between raw SNR $P \slash \sigma^2$ and distance from BS to user, under the 3GPP TR 38.901 UMi LOS pathloss channel model (summarized in \cite{Rappaport20185GNRmodels}), with 73GHz carrier frequency, 23 dBm maximum user power, -174 dBm/Hz thermal noise density, 5 dB receiver noise figure at BS, and a bandwidth of 100 MHz.
    }
    \label{fig:SNR-dist}
\end{figure} 

\subsection{Algorithm Details and Parameters}
Like the the bisection algorithm of \cite{Alkhateeb2014}, our proposed algorithm \textit{hiePM} is based on sequential beam refinement, but implements additional coding techniques. Thus, we focus our comparison to the bisection refinement of \cite{Alkhateeb2014} to highlight that the use of this coding strategy differentiates \textit{hiePM} from existing beam refinement strategies. For the bisection algorithm of \cite{Alkhateeb2014}, the number of beamforming vectors in each level is $2$, and the power is allocated according to the equal power distribution strategy.

For both $hiePM$ and the bisection algorithm of \cite{Alkhateeb2014}, the finite set of beamforming vectors $\mathcal{W}^S$ are designed with a hierarchical structure,
where individual beamforming vectors $\mathbf{w}(\mathcal{D}_l^k) \in \mathcal{W}_l$ are designed with the objective of near constant gain for signal directions with AoA $\phi \in \mathcal{D}_l^k$ and zero otherwise (Assumption 4). In other words, each beamforming vector solves

\begin{equation}
    \mathbf{A}_{BS}^{H}\mathbf{w}(\mathcal{D}_l^k) = C_s \mathbf{G}_{\mathcal{D}_l^k},
    \label{weght_vecs_prob}
\end{equation}
where $\mathbf{A}_{BS}$ is the $N \times (1\slash \delta)$ matrix of array manifolds
\begin{equation}
    \mathbf{A}_{BS} = [\mathbf{a}(\phi_1) , \mathbf{a}(\phi_2), \ldots \mathbf{a}(\phi_{1 \slash \delta})]
    \label{A_BS}
\end{equation}
$C_s$ is a normalization constant, and $\mathbf{G}_{\mathcal{D}_l^k}$ is an $1\slash \delta \times 1$ vector indicating probed directions
\begin{equation} 
    \mathbf{G}_{\mathcal{D}_l^k}= \begin{cases}  1, & \text{if } \phi \in D_{l}^k\\
    0, & \text{if } \phi \notin D_{l}^k
    \end{cases}.
\end{equation}
An approximate solution to (\ref{weght_vecs_prob}), obtained using the pseudo inverse, is 
\begin{equation}
    \mathbf{w}(\mathcal{D}_l^k) = C_s (\mathbf{A}_{BS}\mathbf{A}_{BS}^{H})^{-1} \mathbf{A}_{BS}\mathbf{G}_{\mathcal{D}_l^k}.
    \label{weght_vecs_soln}
\end{equation}
The resulting beamforming weight vectors, applied with phase and gain control at each element, produce beam patterns with improved sidelobe suppression, and near constant gain in the intended probing directions. We can use these vectors in our simulations to ensure that our analytic Assumption~\ref{asm:idealbeam} is a matter of analytic simplicity but is not consequential in a practical setting. 

To represent non-adaptive algorithms that are a variation of random coding, such as the random hashing algorithm of \cite{Abari2016}, we compare to the random search algorithm that randomly scans the region of interest. The random search algorithm uses a codebook $\mathcal{W}^{\frac{q}{n}}$ of size $|\mathcal{W}^{\frac{q}{n}}|$ = $\binom{n}{q}$ which consists of all possible beam patterns with total width {$\frac{q}{n}|\theta_{\text{max}}-\theta_{\text{min}}|$}, where the region of interest {$|\theta_{\text{max}}-\theta_{\text{min}}|$} has been divided into $n$ non-overlapping directions, and $q$ directions are probed in each beam pattern. At any time instant $t$, the random search algorithm randomly selects a beamforming vector $\textbf{w}_{t+1}$ from the pre-designed codebook $\mathcal{W}^{\frac{q}{n}}$. A fixed number of measurements $\tau$ are taken according to~(\ref{eq:measure}) and the final beamforming vector is selected according to~(\ref{eq:Bay}) and~(\ref{eq:Fres}). The discretization parameter is set to $n= 1\slash\delta = 128$, we set $\tau = 28$,  and we plot various values of $q$. The performance of $hiePM$ over the optimized choice of $q$ is important as it provides a first order insight into ``adaptivity gain."

\subsection{Simulation Results}
\label{results:static_alpha}
In this section, we provide a comparative analysis of our proposed \textit{hiePM} algorithm against prior work \cite{Alkhateeb2014} and \cite{Abari2016}. In particular, Fig.~\ref{fig:sim-errorall} plots the error probability as a function of raw SNR. In summary, Fig.~\ref{fig:sim-errorall} shows that both fixed-length and variable-length stopping variations of \textit{hiePM} outperform the bisection algorithm of \cite{Alkhateeb2014} as well as random coding, or random-hash based solutions of \cite{Abari2016}. We also note that random beamforming codebooks outperform the bisection algorithm of \cite{Alkhateeb2014}, as expected by our analysis in Remark \ref{remark:compare_rate}. By optimizing the coding rate $q$, and comparing against \textit{hiePM}, one can also fully characterize the adaptivity gain. Finally, we note that our analytic upper bound {(in Fig.~\ref{fig:upperbound})} is rather loose and $hiePM$ performs significantly better {than our analysis predicted}.

\subsubsection{Probability of Error versus Raw SNR}







For the system and channel described above, we conduct the simulation scenario where the average error probability as a function of raw SNR is analyzed. We take the error probability of the AoA estimation to be the probability of selecting an erroneous final beamforming vector $\text{Prob}\{\hat{\textbf{w}}(z_{1:\tau}, \textbf{w}_{1:\tau}, \epsilon, \delta) \neq \textbf{w}(\{\phi\})\}$.



\begin{figure}
\includegraphics[width = 0.48\textwidth]{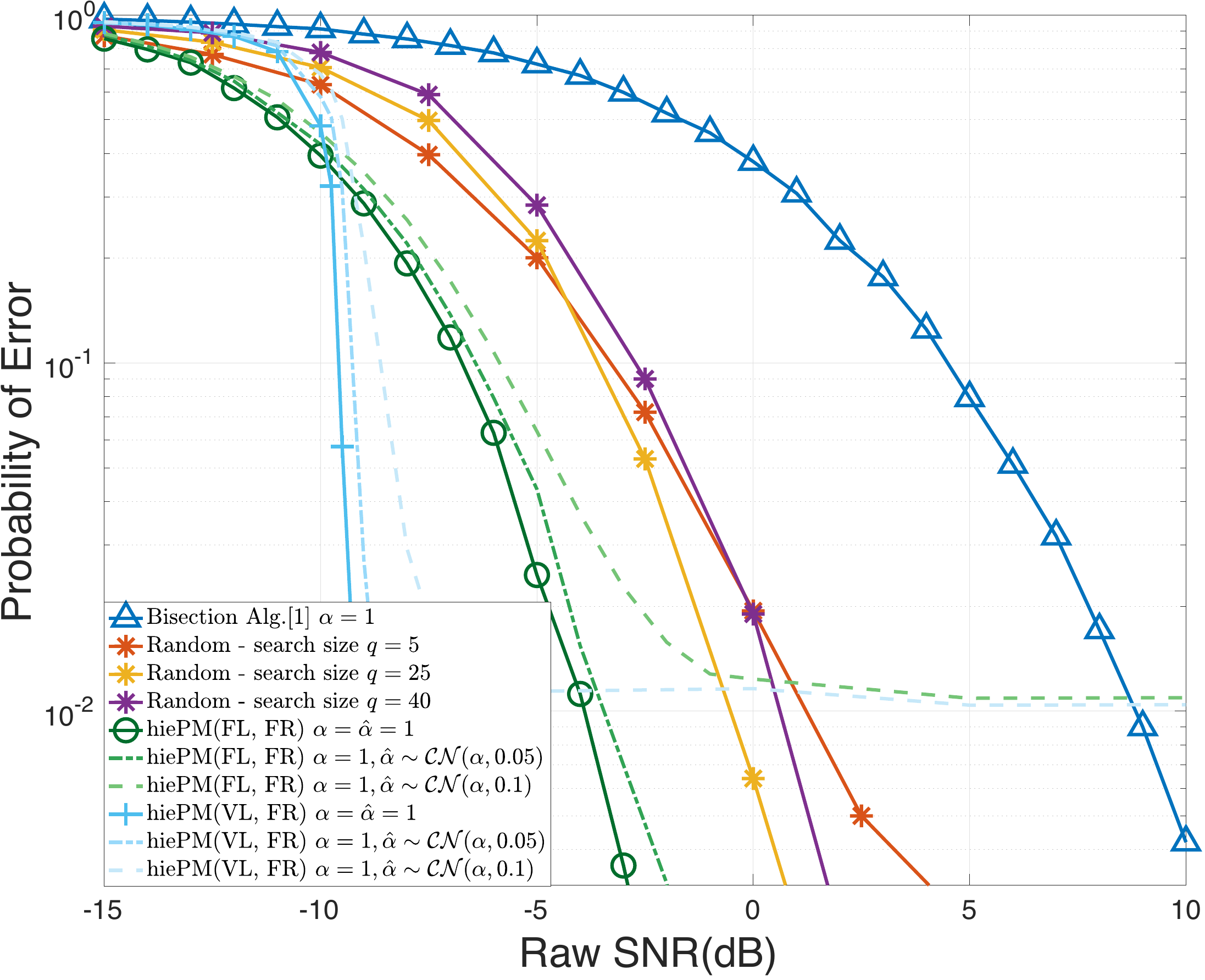}
    \caption{Comparison of the error probability between \textit{hiePM}, the random search algorithm, and the vanilla bisection algorithm of \cite{Alkhateeb2014} as a function of raw SNR  $P/\sigma^2$. Initial access length $\tau=28$, achieved under 2 ms using the 5G NR PRACH format B4 \cite{Lin20185GNR} ($E[\tau]=28$ for variable-length stopping type), is used for acquiring the AoA with resolution $1\slash \delta=128$. 
    }
    \label{fig:sim-errorall}
\end{figure} 

For clarity, from now on we use the naming convention \textit{hiePM(stopping-criterion, resolution-criterion)} to specify the case selections of stopping criteria and resolution-criteria in the proposed \textit{hiePM} algorithm (detailed in algorithm 1). To ensure a reasonable comparison, we first discuss \textit{hiePM}(FL, FR) which is most comparable to the bisection algorithm of \cite{Alkhateeb2014} and the random search algorithm described above. Fig. \ref{fig:sim-errorall} shows the superior performance of \textit{hiePM}(FL, FR) with fixed and known fading coefficient $\alpha =1$ over both the bisection algorithm of \cite{Alkhateeb2014} and the random search algorithm. 
We also notice that under reasonable tuning of parameter $q$, even the non-adaptive random search algorithm achieves better performance than the adaptive bisection algorithm of \cite{Alkhateeb2014}. As we expected from Remark~\ref{remark:compare_rate}, the best performance is achieved by \textit{hiePM} due to its sequential coding strategy, while the performance of the bisection algorithm of \cite{Alkhateeb2014} suffers as it resembles a repetition code. 

Improvements in the probability of error are further demonstrated by \textit{hiePM}(VL, FR) with targeted error probability $\epsilon$ selected such that $\mathbb{E}[\tau] = 28$, { i.e., the parameter $\epsilon$ is strategically chosen (usually $\epsilon$ is large), such that \textit{hiePM}(VL, FR) has an expected duration of the initial access phase equal to the duration of the fixed-length variations, this ensures a fair comparison. Of course, $\epsilon$ may be selected to be close to zero, thereby ensuring the best error probability and aquisition rate at the cost of a longer initial access phase duration}. The benefit of allowing a variable stopping time is evident in that it causes a sharp drop in the error probability at approximately -10 dB raw SNR. The error probability upper bound (Corollary~\ref{cor_eps}) on \textit{hiePM}(VL, FR) is also plotted. We see in Fig.~\ref{fig:sim-errorall} that this upper bound predicts the sharp slope of \textit{hiePM}(VL, FR), theoretically guaranteeing a significant performance improvement in error probability for \textit{hiePM}(VL, FR) over the bisection algorithm of \cite{Alkhateeb2014} and the random search algorithm for large SNR. A further exploration of this sharp transition in the low ($<0$ dB) raw SNR regime is beyond the scope of this paper.


\subsubsection{Investigating effects of imperfect channel knowledge}
The bisection algorithm of \cite{Alkhateeb2014} learns the AoA without any knowledge of the channel. It combines the procedures of AoA estimation and channel estimation.   
On the other hand our proposed algorithm \textit{hiePM}, requires knowledge of the fading coefficient $\alpha$ in the posterior update of Algorithm 1~(\ref{eq:Bay}).  
While a channel estimation procedure can be used to learn $\alpha$ preceding \textit{hiePM}, perhaps in a short preliminary phase, we explore the performance achieved using an estimate for the fading coefficient $\hat{\alpha}$ instead. We find that the improved performance by \textit{hiePM} over the bisection algorithm of \cite{Alkhateeb2014} and the random search algorithm holds even without full knowledge of the fading coefficient $\alpha$. 
To see this we consider the case of a mismatched update rule (\ref{eq:Bay}) with an estimate for the fading coefficient $ \hat{\alpha} = \mathcal{CN}(\alpha, \sigma_{\alpha}^2)$. We see that even under a reasonably mismatched estimate of the fading coefficient ($\sigma_{\alpha}^2=0.05$), all \textit{hiePM} based algorithms still achieve a lower probability of error than the bisection algorithm of \cite{Alkhateeb2014}. In other words, the degradation due to estimation error is far less significant, saturating in error probability only at very large SNR ($> 5$ dB). {As we can see in Fig.~\ref{fig:sim-errorall} using a mismatched estimate of the fading coefficient $\alpha$ causes the performance of probability of error to saturate at large SNR ($> 0$ dB). This reflects the times when the estimate of the fading coefficient $\alpha$ is very inaccurate, which will occur with a constant probability regardless of the SNR value, due to our modeling of $\hat{\alpha}$. In practice, the accuracy of the estimate $\hat{\alpha}$ will improve as SNR increases. However, this is beyond the scope of this paper and we refrain from investigating such effects.}

\label{PoEsims}

\subsubsection{{Spectral efficiency} versus Raw SNR}
Practically speaking, a more efficient AoA learning algorithm is advantageous in that it both reduces communication overhead and increases the accuracy of the final beamforming vector. Next, we empirically analyze the overall performance of a communication link established by the proposed algorithm \textit{hiePM} in terms of the data {spectral efficiency}. The {spectral efficiency} is evaluated according to equation (\ref{eq:data_rate}), using the final beamforming vector $\hat{\textbf{w}}(z_{1:\tau}, \textbf{w}_{1:\tau}, \epsilon, \delta)$ resulting from each algorithm. Due to its dependence on the final beamforming vector $\hat{\textbf{w}}$, the spectral efficiency encompasses both the design parameters $\epsilon$ and $\delta$, which have been the focus of our analysis, while still providing an intuitive practical measure. We set the total communication time frame to $T = 100 \times \mathbb{E}[\tau]$ (further optimization of this parameter beyond the scope of this paper).

\begin{figure}
    \centering
    \includegraphics[width = 0.48\textwidth]{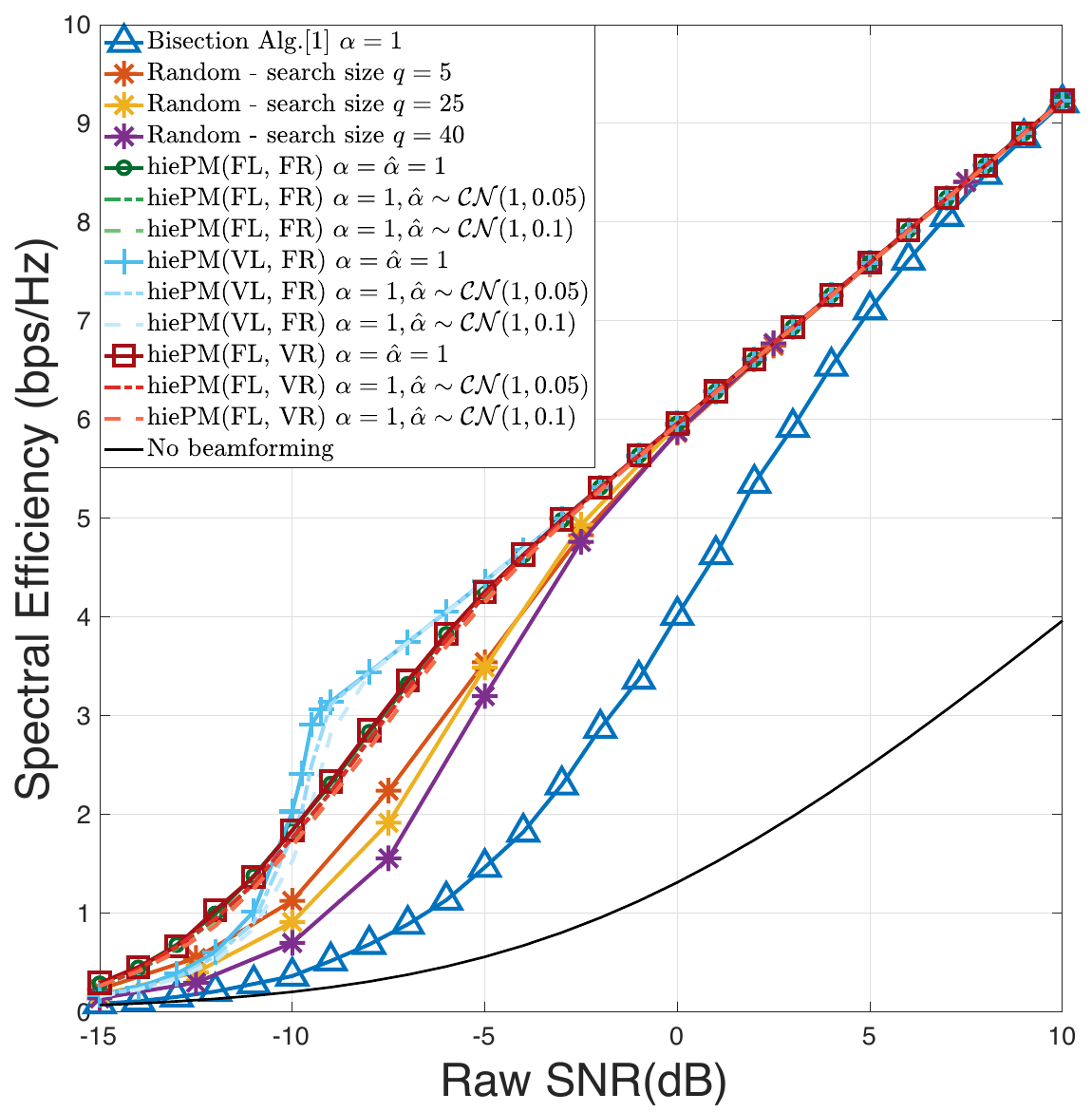}
    \caption{Comparison of the {spectral efficiency} obtained by \textit{hiePM}, the random search algorithm, and the vanilla bisection algorithm of \cite{Alkhateeb2014} as a function of raw SNR  $P/\sigma^2$. Initial access time $\tau=28$, achieved under 2 ms using the 5G NR PRACH format B4 \cite{Lin20185GNR} ($E[\tau]=28$ for variable-length stopping). The {spectral efficiency} is given by (\ref{eq:data_rate}) with the final beamforming vector $\hat{\textbf{w}}$ designed by the respective algorithm.}
    \label{fig:sim_rate}
\end{figure}

Fig. \ref{fig:sim_rate} shows the gain in {spectral efficiency} obtained by various implementations of the proposed algorithm \textit{hiePM} over both the bisection algorithm of \cite{Alkhateeb2014} and the random search algorithm for the system and channel described above as a function of raw SNR. The {spectral efficiency} when no beamforming is used is provided for reference. Fig. \ref{fig:sim_rate} shows that all variants of \textit{hiePM} outperform the bisection algorithm of \cite{Alkhateeb2014} significantly in the raw SNR regime of (-5dB to 5dB). On the other hand, the performance of the bisection algorithm of \cite{Alkhateeb2014} approaches the performance of \textit{hiePM} as raw SNR grows beyond 7dB or so.  Fig. \ref{fig:sim_rate} also shows the benefits of opportunistically selecting the resolution of the final beam as is done under \textit{hiePM}(FL, VR) according to (\ref{eq:Vres}). This is particularly important in very low raw SNR models (-15dB to -7dB) where \textit{hiePM}(FL, VR) adapts the final beamforming vector to the final posterior distribution at time $\tau$, hence setting the angular resolution of the communication beam in an opportunistic manner. Even more importantly, this significant performance improvement is robust to channel estimation error and mismatched estimate of the fading coefficient $\hat{\alpha}$. To understand this phenomenon we refer to Fig. \ref{fig:sim-errorall}, where the error probability of finding the correct beam with resolution $1/\delta$, when SNR is less than -5dB, is non-negligible under \textit{hiePM}(FL, FR), and \textit{hiePM}(VL, FR). 

\subsection{Time varying channel}
\label{results_timevarying_alpha}
\begin{figure}
    \centering
\includegraphics[width =0.48\textwidth]{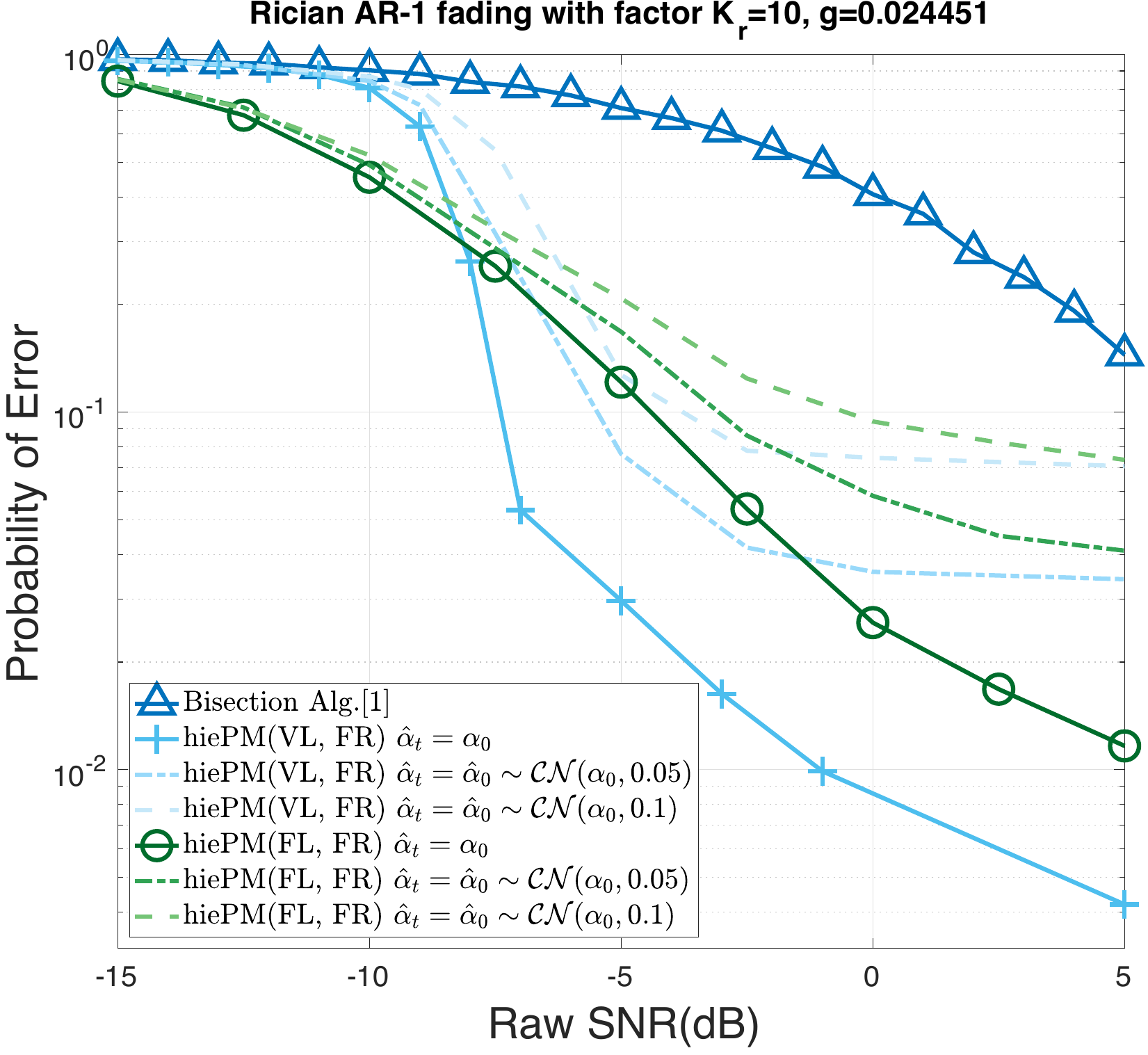}
    \caption{Comparison of the error probability between \textit{hiePM}, the random search algorithm, and the vanilla bisection algorithm of \cite{Alkhateeb2014} as a function of raw SNR  $P/\sigma^2$ under Rician AR-1 fading with factor $K_r =10$, and $g=0.024451$ (i.e. $T_c=2$). Initial access length $\tau=28$, achieved under 2 ms using the 5G NR PRACH format B4 \cite{Lin20185GNR} ($E[\tau]=28$ for variable-length stopping type), is used for acquiring the AoA with resolution $1\slash \delta=128$. 
    }
    \label{fig:rician-errorall}
\end{figure} 
\begin{figure}
    \centering
    \includegraphics[width = 0.48\textwidth]{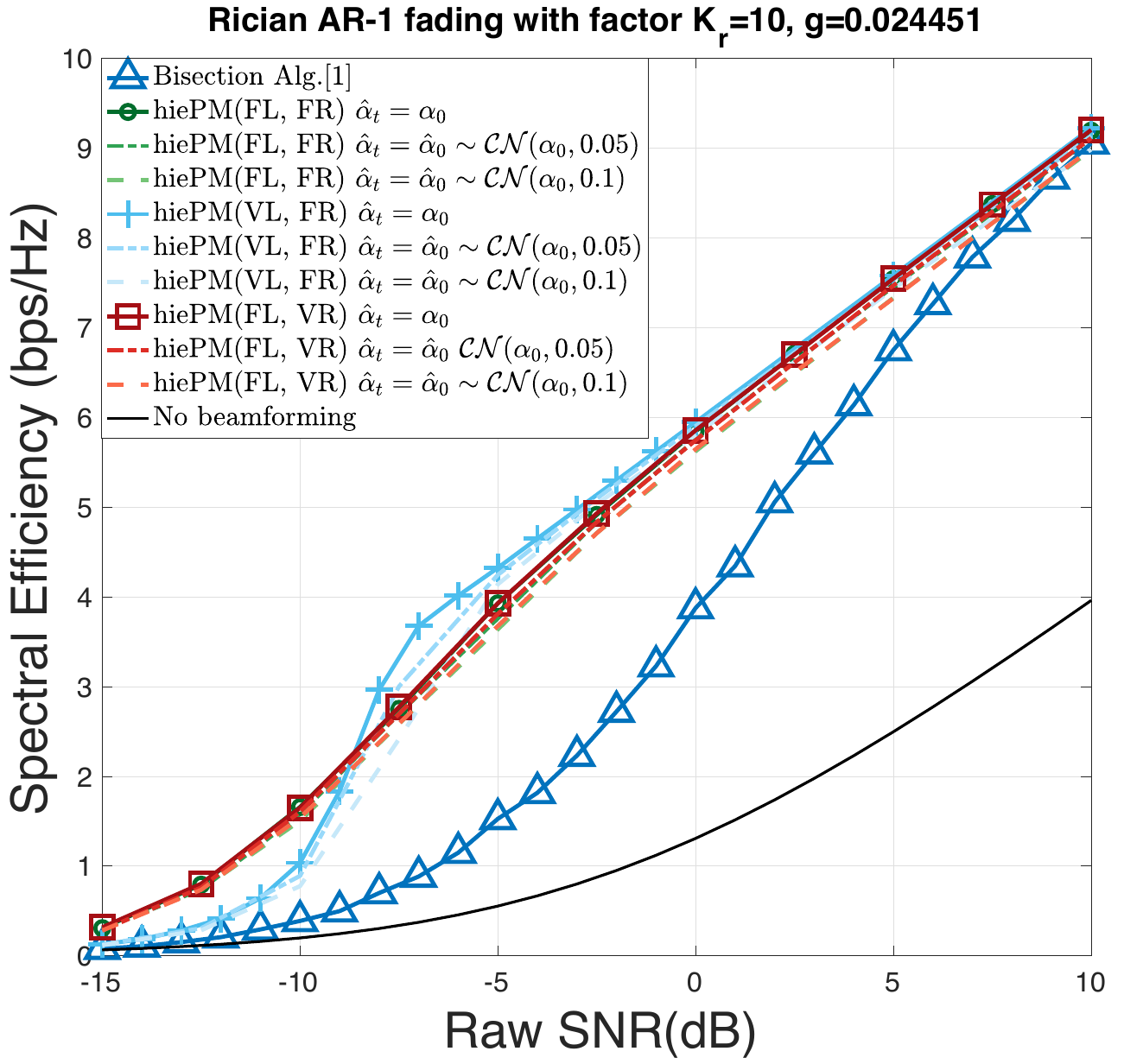}
    \caption{Comparison of the transmission rates obtained by \textit{hiePM} and the vanilla bisection algorithm of \cite{Alkhateeb2014} as a function of raw SNR  $P/\sigma^2$ under Rician AR-1 fading with factor $K_r =10$, and $g=0.024451$, (i.e. $T_c=2$). Initial access time $\tau=28$, achieved under 2 ms using the 5G NR PRACH format B4 \cite{Lin20185GNR} ($E[\tau]=28$ for variable-length stopping). The transmission rate is given by (8) - in original manuscript  with the final beamforming vector $\hat{\textbf{w}}$ designed by the respective algorithm.  }
    \label{fig:rician_rate}
\end{figure}
In this subsection, we will discuss the channel coherence time and how our initial beam alignment algorithm works in a time-varying channel scenario. We verify our framework by extending our algorithms to be adapted to a simple Rician AR-1 model. Let us consider a Rician AR-1 fading channel of factor $K_r$ with perfect knowledge of the operating SNR (large-scale fading) as well as perfect frequency/phase synchronization, i.e. the fading coefficient is given as
\begin{equation} \label{Rician}
\alpha_t = \sqrt{\frac{K_r}{1+K_r}} \mu + \sqrt{\frac{1}{1+K_r}} \beta_t, \ t=0,1,2,...,\tau,
\end{equation}
where $\mu=1$ and $\beta_t \sim CN(0,1)$ is the complex Gaussian diffusion AR process given as
\begin{equation} \label{AR-d}
\beta_{t+1} = \beta_t \sqrt{1-g}  + e_t\sqrt{g}, \ \ t=0,1,2,...,\tau,
\end{equation}
where $e_t\sim \mathcal{CN}(0,1)$ is the independent noise term. The correlation parameter $g$ is set such that 
\begin{equation}
 1-(1-g)^{14 T_c} = 0.5,
\end{equation}
where $T_c$ is the $50\%$ time of the diffusion $\beta_t$ in ms (recall that we assume a system with 14 beam slots in $1$ ms). Combining (\ref{Rician}) and (\ref{AR-d}), the Rician AR-1 model can be written as 
\begin{equation}
\begin{aligned}
\label{AR-1}
    \alpha_{t+1} = &\sqrt{\frac{K_r}{1+K_r}} \mu  + \Bigg( \alpha_t-\sqrt{\frac{K_r}{1+K_r}} \mu \Bigg) \sqrt{1-g} \\
    &+ e_t\sqrt{\frac{g}{1+K_r}}, \ \ t=0,1,2,...,\tau.
\end{aligned}
\end{equation}

Fig.~\ref{fig:rician-errorall} demonstrates the robustness of $hiePM$ to the Rician AR-1 fading channel model described above with coherence time $T_c=2$ ms of the AR process $\beta_t$, and a Rician factor $K_r=10$ (this is a reasonable value, e.g. indoor mmWave channel models \cite{Kfactor_MUKHERJEE}). We again use an erroneous/mismatched and fixed estimate of the fading coefficient $\hat{\alpha}_t = \hat{\alpha}_0 \sim \mathcal{CN}(\alpha_0,\sigma_{\alpha}^2)$ for $t = 1, 2, \ldots, \tau$. In particular, we compare the performance achieved by our $hiePM$ algorithms with different degrees of knowledge of the fading estimate (i.e. $\sigma_{\alpha}^2 = 0, 0.05,0.1$) against the performance obtained by the bisection algorithm of \cite{Alkhateeb2014}. As expected, the performance of the probability of error worsens for both algorithms in a time-varying fading, as compared to the static model $\alpha=1$ in Fig.~\ref{fig:sim-errorall}. However, even under a mismatched and fixed estimate of the fading coefficient $\hat{\alpha}_t$, our main conclusions still hold. In particular, the performance degradation in spectral efficiency due to the time-varying channel is almost negligible, as we show in Fig.~\ref{fig:rician_rate}. Note that, this is the effect of the time-varying channel during the initial access phase, whereas in the communication phase the spectral efficiency is calculated by (\ref{eq:data_rate}) because our focus is the impact of a time-varying channel on the initial beam alignment. Our variable resolution algorithm $hiePM$(FL,VR) (with opportunistic choice of final beamwidth) is unaffected in terms of spectral efficiency, while the $hiePM$(FL,FR) and $hiePM$(VL,FR) cases incur a small loss of spectral efficiency due to the degree of the mismatched estimate (correlated to the severity of $\sigma_{\alpha}^2$).

\section{Conclusion and Future Work}
In this paper, we addressed the initial access problem for mmWave communication with beamforming techniques. With a single-path channel model, the proposed sequential beam search algorithm $hiePM$ demonstrates a systematic way of actively learning an optimal beamforming vector from the hierarchical beamforming codebook of \cite{Alkhateeb2014}.  

Using a single-path channel model, we characterize the performance of the proposed learning algorithm 
$hiePM$ by the resolution and the error probability of learning the AoA, which are closely related to the link quality established by the final beamforming vector. We analyze $hiePM$ by giving an upper bound on the expected search time $\tau_{\epsilon,\delta}$ required to achieve a resolution $\frac{1}{\delta}$ and error probability $\epsilon$ in Theorem~\ref{thm}. As a corollary, we provide an upper bound on the error probability achieved with a search time $\mathbb{E}[\tau_{\epsilon,\delta}]$, and resolution $\frac{1}{\delta}$ for $hiePM$ in Corollary~\ref{cor_eps}. We also specialize our analysis and compare the error exponent obtained by $hiePM$ and the bisection algorithm of \cite{Alkhateeb2014}. A higher error exponent is shown across a wide range of raw SNR even when only 1-bit of information about the measurement is available to $hiePM$. The numerical simulations show a significant improvement on the spectral efficiency over the previous vanilla bisection algorithm of \cite{Alkhateeb2014} and the random search algorithm modelled as a best case of \cite{Abari2016,Song19}, demonstrating a first work of possible standalone mmWave communication.

Future directions of this work include generalizing the channel model and considering multiple paths, as well as learning the fading coefficient together with the direction during beam search. On the theoretical end, closing the gap between the upper bound of error probability and its actual performance (demonstrated in Fig.~\ref{fig:sim-errorall}) is worth pursuing for theoretical interest. On the practical side, reducing the computation complexity of the posterior calculations and required statistics will be helpful for implementation purposes. 

\bibliographystyle{IEEEbib}
\bibliography{./refs}

\begin{thebibliography}{10}

\bibitem{Alkhateeb2014}
Ahmed Alkhateeb, Omar~El Ayach, Geert Leus, and Robert~W Heath,
\newblock ``{Channel Estimation and Hybrid Precoding for Millimeter Wave
  Cellular Systems},''
\newblock {\em IEEE Journal of Selected Topics in Signal Processing}, vol. 8,
  no. 5, pp. 831--846, 2014.

\bibitem{Chiu2016}
Sung~En Chiu and Tara Javidi,
\newblock ``{Sequential measurement-dependent noisy search},''
\newblock in {\em 2016 IEEE Information Theory Workshop, ITW 2016}, 2016.

\bibitem{Yan2016}
Songbai Yan, Kamalika Chaudhuri, and Tara Javidi,
\newblock ``{Active Learning from Imperfect Labelers},''
\newblock in {\em Advances in Neural Information Processing Systems (NIPS)},
  2016.

\bibitem{Maccartney2013}
George~R. Maccartney, Junhong Zhang, Shuai Nie, and Theodore~S. Rappaport,
\newblock ``{Path loss models for 5G millimeter wave propagation channels in
  urban microcells},''
\newblock {\em GLOBECOM - IEEE Global Telecommunications Conference}, pp.
  3948--3953, 2013.

\bibitem{Rappaport2017}
Theodore~S. Rappaport, Yunchou Xing, George~R. MacCartney, Andreas~F. Molisch,
  Evangelos Mellios, and Jianhua Zhang,
\newblock ``{Overview of Millimeter Wave Communications for Fifth-Generation
  (5G) Wireless Networks-With a Focus on Propagation Models},''
\newblock {\em IEEE Transactions on Antennas and Propagation}, vol. 65, no. 12,
  pp. 6213--6230, 2017.

\bibitem{Molisch2017}
Andreas~F. Molisch, Vishnu~V. Ratnam, Shengqian Han, Zheda Li, Sinh Le~Hong
  Nguyen, Linsheng Li, and Katsuyuki Haneda,
\newblock ``{Hybrid Beamforming for Massive MIMO: A Survey},''
\newblock {\em IEEE Communications Magazine}, vol. 55, no. 9, pp. 134--141,
  2017.

\bibitem{Nayebi17}
Elina Nayebi and Bhaskar~D. Rao,
\newblock ``{Semi-blind channel estimation for multiuser massive MIMO
  systems},''
\newblock {\em IEEE Transactions on Signal Processing}, vol. 66, no. 2, pp.
  540--553, 2017.

\bibitem{Giordani2016}
M.~Giordani, M.~Mezzavilla, C.~N. Barati, S.~Rangan, and M.~Zorzi,
\newblock ``Comparative analysis of initial access techniques in 5g mmwave
  cellular networks,''
\newblock in {\em Proceedings of the 2016 Annual Conference on Information
  Science and Systems (CISS)}, March 2016, pp. 268--273.

\bibitem{Kaspi2018}
Yonatan Kaspi, Ofer Shayevitz, and Tara Javidi,
\newblock ``{Searching with Measurement Dependent Noise},''
\newblock {\em IEEE Transactions on Information Theory}, vol. 64, no. 4, pp.
  2690--2705, 2018.

\bibitem{Lalitha2017}
Anusha Lalitha, Nancy Ronquillo, and Tara Javidi,
\newblock ``{Improved Target Acquisition Rates with Feedback Codes},''
\newblock {\em arXiv:1712.05865}, 2017.

\bibitem{Shabara2017}
Yahia Shabara, C.~Emre Koksal, and Eylem Ekici,
\newblock ``{Linear Block Coding for Efficient Beam Discovery in Millimeter
  Wave Communication Networks},''
\newblock {\em arXiv preprint arXiv:1712.07161}, 2017.

\bibitem{Abari2016}
Omid Abari, Haitham Hassanieh, Michael Rodriguez, and Dina Katabi,
\newblock ``{Millimeter Wave Communications: From Point-to-Point Links to Agile
  Network Connections},''
\newblock in {\em Proceedings of the 15th ACM Workshop on Hot Topics in
  Networks - HotNets '16}, 2016, pp. 169--175.

\bibitem{Song19}
Xiaoshen Song, Saeid Haghighatshoar, and Giuseppe Caire,
\newblock ``{Efficient beam alignment for mmWave single-carrier systems with
  hybrid MIMO transceivers},''
\newblock {\em IEEE Transactions on Wireless Communications}, 2019.

\bibitem{Ding2018}
Yacong Ding, Sung~En Chiu, and Bhaskar~D. Rao,
\newblock ``{Bayesian Channel Estimation Algorithms for Massive MIMO Systems
  with Hybrid Analog-Digital Processing and Low Resolution ADCs},''
\newblock {\em IEEE Journal on Selected Topics in Signal Processing}, vol. 12,
  no. 3, pp. 499--513, 2018.

\bibitem{3gppTR37340}
3GPP,
\newblock ``{NR - Multi-connectivity - Overall description (Stage 2)},''
\newblock {\em 3GPP TS 37.340}, 2018.

\bibitem{Naghshvar2015a}
Mohammad Naghshvar, Tara Javidi, and Kamalika Chaudhuri,
\newblock ``{Bayesian Active Learning With Non-Persistent Noise},''
\newblock {\em IEEE Transactions on Information Theory}, vol. 61, no. 7, pp.
  4080--4098, 2015.

\bibitem{Roh2014}
Wonil Roh, Ji~Yun Seol, Jeong~Ho Park, Byunghwan Lee, Jaekon Lee, Yungsoo Kim,
  Jaeweon Cho, Kyungwhoon Cheun, and Farshid Aryanfar,
\newblock ``{Millimeter-wave beamforming as an enabling technology for 5G
  cellular communications: Theoretical feasibility and prototype results},''
\newblock {\em IEEE Communications Magazine}, vol. 52, no. 2, pp. 106--113,
  2014.

\bibitem{Tse2005}
David Tse and Pramod Viswanath,
\newblock {\em {Fundamentals of Wireless Communication, Chapter 07: MIMO I :
  spatial multiplexing and channel modeling}},
\newblock Cambridge University Press, 2005.

\bibitem{Dasgupta2007}
Sanjoy Dasgupta, Daniel Hsu, and Claire Monteleoni,
\newblock ``{A general agnostic active learning algorithm},''
\newblock {\em Nips}, vol. 20, no. 2, pp. 353--360, 2007.

\bibitem{Shayevitz2011}
Ofer Shayevitz and Meir Feder,
\newblock ``{Optimal feedback communication via posterior matching},''
\newblock {\em IEEE Transactions on Information Theory}, vol. 57, no. 3, pp.
  1186--1222, 2011.

\bibitem{Choi2016}
Junil Choi, Jianhua Mo, Student Member, and Robert W~Heath Jr,
\newblock ``{Near Maximum-Likelihood Detector and Channel Estimator for Uplink
  Multiuser Massive MIMO Systems With One-Bit ADCs},''
\newblock {\em IEEE Transactions on Communications}, vol. 64, no. 5, pp.
  2005--2018, 2016.

\bibitem{Ho19}
David~KW Ho and Bhaskar~D. Rao,
\newblock ``{Antithetic Dithered 1-bit Massive MIMO Architecture: Efficient
  Channel Estimation via Parameter Expansion and Pseudo Maximum Likelihood},''
\newblock {\em IEEE Transactions on Signal Processing}, 2019.

\bibitem{Rappaport2002}
T.~S. Rappaport,
\newblock {\em {Wireless Communications: Principles and Practice}},
\newblock NJ: Prentice-Hall, 2002.

\bibitem{Va2017}
Vutha Va, Junil Choi, and Robert~W. Heath,
\newblock ``{The Impact of Beamwidth on Temporal Channel Variation in Vehicular
  Channels and Its Implications},''
\newblock {\em IEEE Transactions on Vehicular Technology}, vol. 66, no. 6, pp.
  5014--5029, 2017.

\bibitem{Lin20185GNR}
Xingqin Lin, Jingya Li, Robert Baldemair, Thomas Cheng, Stefan Parkvall, Daniel
  Larsson, Havish Koorapaty, Mattias Frenne, Sorour Falahati, Asbj{\"o}rn
  Gr{\"o}vlen, and Karl Werner,
\newblock ``5g new radio: Unveiling the essentials of the next generation
  wireless access technology,''
\newblock {\em CoRR}, vol. abs/1806.06898, 2018.

\bibitem{Rappaport20185GNRmodels}
T.~S. Rappaport, Y.~Xing, G.~R. MacCartney, A.~F. Molisch, E.~Mellios, and
  J.~Zhang,
\newblock ``Overview of millimeter wave communications for fifth-generation
  (5g) wireless networks—with a focus on propagation models,''
\newblock {\em IEEE Transactions on Antennas and Propagation}, vol. 65, no. 12,
  pp. 6213--6230, Dec 2017.

\bibitem{Kfactor_MUKHERJEE}
S.~{Mukherjee}, S.~S. {Das}, A.~{Chatterjee}, and S.~{Chatterjee},
\newblock ``Analytical calculation of rician k-factor for indoor wireless
  channel models,''
\newblock {\em IEEE Access}, vol. 5, pp. 19194--19212, 2017.

\bibitem{Chiu2018}
Sung-En Chiu, Anusha Lalitha, and Tara Javidi,
\newblock ``{Bit-wise Sequential Coding with Feedback},''
\newblock in {\em IEEE International Symposium on Information Theory}, 2018.

\bibitem{Shayevitz2016}
Ofer Shayevitz and Meir Feder,
\newblock ``{A Simple Proof for the Optimality of Randomized Posterior
  Matching},''
\newblock {\em IEEE Transactions on Information Theory}, vol. 62, no. 6, pp.
  3410--3418, 2016.

\bibitem{Naghshvar2015}
Mohammad Naghshvar, Tara Javidi, and Mich{\`{e}}le Wigger,
\newblock ``{Extrinsic Jensen-Shannon divergence: Applications to
  variable-length coding},''
\newblock {\em IEEE Transactions on Information Theory}, vol. 61, no. 4, pp.
  2148--2164, 2015.

\bibitem{Li2014}
Cheuk~Ting Li and Abbas {El Gamal},
\newblock ``{An efficient feedback coding scheme with low error probability for
  discrete memoryless channels},''
\newblock {\em IEEE International Symposium on Information Theory -
  Proceedings}, vol. 61, no. 6, pp. 416--420, 2014.

\bibitem{Naghshvar2013ASH}
Mohammad Naghshvar and Tara Javidi,
\newblock ``{Active sequential hypothesis testing},''
\newblock {\em Annals of Statistics}, vol. 41, no. 6, pp. 2703--2738, 2013.

\bibitem{Lin1991}
Jianhua Lin,
\newblock ``{Divergence Measures Based on the Shannon Entropy},''
\newblock {\em IEEE Transactions on Information Theory}, vol. 37, no. 1, pp.
  145--151, 1991.

\bibitem{Chung2006}
Fan Chung and Linyuan Lu,
\newblock ``{Concentration Inequalities and Martingale Inequalities: A
  Survey},''
\newblock {\em Internet Mathematics}, vol. 3, no. 1, pp. 79--127, 2006.

\end{thebibliography}

\appendices

\section{Optimal Threshold for 1-bit measurement model}
The complete 1-bit measurement model in Sec. \ref{sec:posterior_update} is written as
\begin{equation} 
\begin{aligned}
    y_{t+1} &= \alpha \sqrt{P} \*w_{t+1}^H \*a(\phi) + \*w_{t+1}^H \*n_{t+1} \\
    z_{t+1} &= \mathds{1}(|y_{t+1}|^2 > v_t) \\
            &= \mathds{1}(\phi\in D_{l_t}^{k_t} ) \oplus u_{t}(\phi), \ \ u_{t}(\phi) \sim \text{Bern}(p_t(\phi)).
\end{aligned}
\end{equation}

\begin{lemma} \label{lemma:opt_threshold}
The threshold $v_t$ that minimizes the maximum flipping probability $p_t(\phi)$ for all $\phi$ is given by the solution of the following equation
\begin{equation} \label{eq:threshold}
    \int_0^{v_t} \text{Rice}(x; PG ,\sigma^2)\ dx = \int_{v_t}^{\infty} \text{Rice}(x; Pg ,\sigma^2)\ dx,
\end{equation}
where 
\begin{equation} \label{eq:gG}
    \begin{aligned}
        G&:=\min_{\phi\in D_{l_t}^{k_t}} |\*w^H(D_{l_t}^{k_t}) \*a(\phi)|^2 \\
        g&:=\max_{\phi\in [\theta_{\text{min}},\theta_{\text{max}}] \setminus D_{l_t}^{k_t}} |\*w^H(D_{l_t}^{k_t}) \*a(\phi)|^2.
    \end{aligned}
\end{equation}
\end{lemma}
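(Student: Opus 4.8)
The plan is to recognize the choice of $v_t$ as a minimax detection-threshold problem and to reduce it to the crossing point of two monotone error curves. First I would pin down the law of the detection statistic: given the AoA $\phi$ and a unit-norm codeword $\*w_{t+1}$, the observation $y_{t+1}$ is complex Gaussian with mean $\alpha\sqrt{P}\,\*w_{t+1}^H\*a(\phi)$ and variance $\sigma^2$, so its modulus $|y_{t+1}|$ is Rician with a non-centrality parameter that depends on $\phi$ only through the beamforming power gain $|\*w_{t+1}^H\*a(\phi)|^2$. Recalling the equivalent model $z_{t+1}=\mathds{1}(\phi\in D_{l_t}^{k_t})\oplus u_t(\phi)$, the flip probability $p_t(\phi)$ is the \emph{miss} probability that $|y_{t+1}|^2\le v_t$ when $\phi\in D_{l_t}^{k_t}$, and the \emph{false-alarm} probability that $|y_{t+1}|^2> v_t$ when $\phi\notin D_{l_t}^{k_t}$.

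The key monotonicity step is that the Rician family is stochastically increasing in its non-centrality parameter, equivalently that the non-central chi-squared law of $|y_{t+1}|^2$ has a CDF decreasing in the non-centrality. I would invoke this classical stochastic-ordering fact, which follows from the Poisson-mixture representation of the non-central chi-squared. It localizes the two worst cases at once: over $\phi\in D_{l_t}^{k_t}$ the miss probability is largest at the smallest in-beam gain $G$, while over $\phi\notin D_{l_t}^{k_t}$ the false-alarm probability is largest at the largest out-of-beam gain $g$, with $G$ and $g$ as in (\ref{eq:gG}). Thus the objective collapses to
\begin{equation*}
\max_{\phi} p_t(\phi)=\max\{P_M(v_t),\,P_F(v_t)\},
\end{equation*}
where $P_M(v_t)=\int_0^{v_t}\text{Rice}(x;PG,\sigma^2)\,dx$ and $P_F(v_t)=\int_{v_t}^{\infty}\text{Rice}(x;Pg,\sigma^2)\,dx$.

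Finally I would solve this scalar minimax. The term $P_M$ is continuous and strictly increasing from $0$ to $1$, while $P_F$ is continuous and strictly decreasing from $1$ to $0$; hence their pointwise maximum decreases up to their unique crossing and increases afterward, so it is minimized exactly where $P_M(v_t)=P_F(v_t)$. Equating the two integrals gives (\ref{eq:threshold}). I expect the monotonicity/stochastic-dominance step to be the main obstacle: it simultaneously guarantees that each side's worst case sits at the extreme gains $G$ and $g$ and that the two error curves are monotone with a unique crossing; once that is in place, identifying and solving the minimax is routine. The sensible beam-design condition $G>g$ is implicit, ensuring the resulting flip probability stays below $1\slash 2$; the crossing equation (\ref{eq:threshold}) itself holds regardless.
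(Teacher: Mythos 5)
Your proposal is correct and follows essentially the same route as the paper's proof: bound the flip probability by its worst case over each region (miss probability at the minimum in-beam gain $G$, false alarm at the maximum out-of-beam gain $g$), then observe that the former is increasing and the latter decreasing in $v_t$, so the minimax is attained at their crossing, which is exactly (\ref{eq:threshold}). The only difference is that you make explicit the stochastic-ordering fact (Rician CDF decreasing in the non-centrality parameter) that the paper uses implicitly when asserting the worst-case gains, which is a welcome clarification rather than a deviation.
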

\begin{proof}
Since $|y_t|\sim \text{Rice} (P|\*w_t^H \*a(\phi)|,\sigma^2)$, we can write the flipping probability $p_t(\phi)$ as:\\
if $\phi\in D_{l_t}^{k_t}$,
\begin{equation} \label{eq:vt_1}
\begin{aligned}
    p_t(\phi) &= \int_{0}^{v_t} \text{Rice}(x;P|\*w^H(D_{l_t}^{k_t}) \*a(\phi)|^2, \sigma^2)\ dx\\
              &\leq \int_{0}^{v_t} \text{Rice}(x;PG, \sigma^2)\ dx,
\end{aligned}
\end{equation}
and if $\phi\notin D_{l_t}^{k_t}$,
\begin{equation} \label{eq:vt_2}
\begin{aligned}
    p_t(\phi) &= \int_{v_t}^{\infty} \text{Rice}(x;P|\*w^H(D_{l_t}^{k_t}) \*a(\phi)|^2, \sigma^2)\ dx\\
              &\leq \int_{v_t}^{\infty} \text{Rice}(x;Pg, \sigma^2)\ dx,
\end{aligned}
\end{equation}
where the upper bound in (\ref{eq:vt_1}) and (\ref{eq:vt_2}) is reached by the minimizer and maximizer in (\ref{eq:gG}), respectively. Since (\ref{eq:vt_1}) is increasing in $v_t$ and (\ref{eq:vt_2}) is decreasing in $v_t$, setting them equal gives the minimax optimizer.
\end{proof}

\section{Average Log-Likelihood and the Extrinsic Jensen-Shannon Divergence} \label{sec:analysis-prelim}
 \label{reEJS}
Our analysis follows similarly to \cite{Chiu2018,Chiu2016} which analyzed a feedback coding scheme and an abstract adaptive target search algorithm, respectively. The analysis is based on the behavior of the posterior of the AoA
\begin{equation}
\pi_i(t) := \prob*{ \phi = \theta_i |y_{1:t}, \*w_{1:t}} \quad \text{for } i=1,2,...,\frac{1}{\delta},
\end{equation}
over time $t=1,2,...$. Recall that the stopping time of $hiePM$ is given by
\begin{equation}
	\tau_{\epsilon,\delta} = \min \{t:  1-  \max_i \pi_{i}(t) \leq \epsilon \},
\end{equation}
which is the first hitting time of the posterior to the $\epsilon$ corner of the probability simplex in $\mathbb{R}^{\frac{1}{\delta}}$.

Convex functionals on the probability simplex such as entropy, KL divergence, etc. have been shown to be useful for analyzing adaptive systems \cite{Chiu2016,Shayevitz2016,Naghshvar2015,Li2014} with the Bayes' rule dynamics on the posterior distribution. Particularly, the functional average log-likelihood \cite{Naghshvar2015} has shown its usefulness in analyzing the behaviour of the posterior in feedback coding systems \cite{Chiu2018}, dynamic spectrum sensing \cite{Lalitha2017}, hypothesis testing \cite{Naghshvar2013ASH}, active learning \cite{Naghshvar2013ASH}, etc. Here, we review some useful concepts through the context of AoA estimation with sequential beamforming:

The average log-likelihood of the posterior $\.\pi(t)$ is defined as
\begin{equation}
    U(t)  := \sum_{i=1}^{1\slash \delta} \pi_{i}(t)  \log \frac{\pi_i(t)}{1-\pi_i(t)}.
\end{equation}
For any beamfomring strategy $\gamma: \.\pi(t) \rightarrow \*w_{t+1}$, $U(t)$ has the following useful properties  
\begin{enumerate}
    \item  $U(t)$ is a submartingale $w.r.t.$ $\.\pi(t)$ with expected drift $EJS$:
    \begin{equation}
        \expect{ U(t+1) | \.\pi(t) } =  U(t) + EJS(\.\pi(t), \gamma),
    \end{equation}
    where $EJS$ is the Extrinsic Jensen-Shannon divergence, defined as
    \begin{equation} 
        EJS(\.\pi(t), \gamma ) = \sum_{i=1}^{1\slash \delta} \pi_i(t) D\[( P_{y_{t+1}|  i , \gamma} \Big\| P_{y_{t+1}| \neq i, \gamma } \]),
    \end{equation}
    with 
    \begin{equation}
        P_{y_{t+1} | \theta_i , \gamma } := f(y_{t+1} |\phi = \theta_i, \*w_{t+1} = \gamma(\.\pi(t)) ), \\
    \end{equation} and 
    \begin{equation}
        \begin{aligned}
            P_{y_{t+1}| \neq i, \gamma } 
            &= \sum_{ j \neq i} \frac{\pi_{j}(t)}{1-\pi_i(t)}  P_{y_{t+1}|  j , \gamma }.
        \end{aligned}
    \end{equation}
    \item The initial value $U(0) = -\log (\frac{1}{\delta}-1)$ is related to the resolutuion $\frac{1}{\delta}$
    \item Level crossing of $U$ is related to the error probability as
    $
        \pi_i(t)<1-\epsilon\  \forall\  i  \ \Rightarrow\  U(t)< \log \frac{1- \epsilon}{\epsilon}.
    $
\end{enumerate}

Analyzing the random drift of $U(t)$ from time 0 with the initial value $U(0)$ up to the first crossing time $\nu:= \min\{ t: U(t) \geq \log \frac{1}{\epsilon} \}$ is closely related to the expected drift given by $EJS$. In particular, we can then establish an upper bound on the expected stopping time $\expect{\tau_{\epsilon,\delta}}$ in terms of the predefined outage probability $\epsilon$ and resolution $1\slash \delta$.  Specifically, we have the following theorem:
\begin{fact}[Theorem 1 in~\cite{Naghshvar2015}] \label{fact:EJSthm}
Define 
\begin{equation}
    \tilde{\pi} := 1- \frac{1}{1+\max \{ n, \log (1\slash \epsilon) \}}.
\end{equation}
For any feedback coding scheme $\gamma$, if \begin{equation}
    EJS(\.\pi(t), \gamma) \geq R \quad \forall t \geq 0
\end{equation}
and 
\begin{equation}
    EJS(\.\pi(t), \gamma) \geq \tilde{\pi} E \quad \forall  t \geq 0 \text{ s.t. } \max_{i} \pi_i(t) \geq \tilde{\pi},
\end{equation}
the expected decoding time associated with error probability $\epsilon$ is bounded by
\begin{equation}
    E[\tau_{\delta,\epsilon} ] \leq   \frac{\log (1\slash \delta)}{R} + \frac{\log (1\slash \epsilon)}{E}  + o(\log (\frac{1}{ \delta \epsilon} )),
\end{equation}
where $o(\cdot)$ is such that $\frac{o(\log (\frac{1}{ \delta \epsilon}))}{\log (\frac{1}{ \delta \epsilon})} \rightarrow 0$ as $\delta \rightarrow 0$ or $\epsilon \rightarrow 0$.
\end{fact}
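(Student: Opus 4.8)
The plan is to prove the stopping-time bound directly from the three structural properties of the average log-likelihood $U(t)$ established above, converting the two drift lower bounds on $EJS$ into an upper bound on $\expect{\tau_{\delta,\epsilon}}$ via the optional stopping theorem. The central object is the drift-compensated process $M(t) := U(t) - \sum_{s=0}^{t-1} EJS(\.\pi(s),\gamma)$, which is a martingale by the submartingale property. Assuming the per-step increment of $U$ is bounded (guaranteed by a bounded maximal single-step KL divergence of the observation model, which also ensures $\expect{\tau_{\delta,\epsilon}}<\infty$ and legitimizes every optional-stopping step below), the identity $\expect{U(\varrho)} - U(0) = \expect{\sum_{s=0}^{\varrho-1} EJS(\.\pi(s),\gamma)}$ holds for any admissible stopping time $\varrho$. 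Taking $\varrho = \tau_{\delta,\epsilon}$ with the global bound $EJS\geq R$ already yields the crude estimate $\expect{\tau_{\delta,\epsilon}} \leq (\log(1/\delta)+\log(1/\epsilon))/R + o(\log\frac{1}{\delta\epsilon})$; since $E\geq R$, the whole point is to sharpen the $\log(1/\epsilon)$ term from the rate $R$ to the larger rate $E$, and this forces a two-phase analysis exploiting the boosted drift.

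For the first phase I would introduce the concentration time $\sigma := \min\{t : \max_i \pi_i(t) \geq \tilde{\pi}\}$ and note that, because stopping requires $\max_i\pi_i(\tau_{\delta,\epsilon}) \geq 1-\epsilon > \tilde{\pi}$ for small $\epsilon$, we have $\sigma \leq \tau_{\delta,\epsilon}$. Applying the optional-stopping identity at $\varrho=\sigma$ with $EJS\geq R$ gives $R\,\expect{\sigma} \leq \expect{U(\sigma)} - U(0)$. Here $-U(0) = \log(1/\delta - 1)$ supplies the leading term $\log(1/\delta)$, while $U(\sigma)$ is controlled by the level of order $\log\frac{1}{1-\tilde{\pi}}$ reached at first concentration plus one bounded overshoot; since $1-\tilde{\pi} = 1/(1+\max\{n,\log(1/\epsilon)\})$, this is only of order $\log\log\frac{1}{\delta\epsilon}$, hence $o(\log\frac{1}{\delta\epsilon})$. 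This yields the first-phase estimate $\expect{\sigma} \leq \log(1/\delta)/R + o(\log\frac{1}{\delta\epsilon})$.

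For the second phase I would apply the optional-stopping identity on the interval $[\sigma,\tau_{\delta,\epsilon}]$, giving $\expect{U(\tau_{\delta,\epsilon})} - \expect{U(\sigma)} = \expect{\sum_{s=\sigma}^{\tau_{\delta,\epsilon}-1} EJS(\.\pi(s),\gamma)}$. On steps where $\max_i\pi_i(s)\geq\tilde{\pi}$ the second hypothesis supplies the boosted drift $EJS\geq\tilde{\pi}E$, while the level-crossing implication together with a bounded overshoot gives $U(\tau_{\delta,\epsilon}) \leq \log(1/\epsilon) + O(1)$, so the total $U$-gain in this phase is at most $\log(1/\epsilon) + o(\log\frac{1}{\delta\epsilon})$. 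If every second-phase step enjoyed the boosted drift, this would immediately give $\expect{\tau_{\delta,\epsilon}-\sigma} \leq (\log(1/\epsilon)+o(\log\frac{1}{\delta\epsilon}))/(\tilde{\pi}E)$; since $\tilde{\pi}\to 1$ as $\delta,\epsilon\to 0$, the factor $1/\tilde{\pi}$ is a lower-order correction, producing $\expect{\tau_{\delta,\epsilon}-\sigma}\leq \log(1/\epsilon)/E + o(\log\frac{1}{\delta\epsilon})$, and adding the two phases gives the claim.

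The main obstacle is that after $\sigma$ the posterior may de-concentrate, dipping back below $\tilde{\pi}$, so the boosted drift is not guaranteed on every second-phase step and the clean computation above is only heuristic. I would resolve this by bounding the expected total number of de-concentration steps. The key facts are that on any excursion below $\tilde{\pi}$ the drift is still at least $R>0$, so $U$ does not drift down in expectation, and that re-entering the concentrated region from just below $\tilde{\pi}$ costs only an $O(1)$ net climb in $U$; combining a renewal/counting argument for the down-crossings of the level of order $\log\frac{1}{1-\tilde{\pi}}$ with the bounded-increment property shows the aggregate expected excursion time is $o(\log\frac{1}{\delta\epsilon})$ and folds into the remainder. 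The two remaining routine ingredients—verifying the integrability conditions for each optional-stopping application and quantifying the overshoots through the maximal single-step KL divergence—are precisely where the bounded-increment regularity of the observation model is invoked, and both again contribute only to the $o(\log\frac{1}{\delta\epsilon})$ term.
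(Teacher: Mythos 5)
The paper itself offers no proof of this statement: it is imported verbatim as a Fact from \cite{Naghshvar2015}, so the comparison must be with the source's argument. Your reconstruction is correct in substance and belongs to the same Burnashev-style family of two-regime drift analyses of the submartingale $U(t)$, but it is packaged differently. The source first converts the boosted-drift hypothesis into a condition on the \emph{level of $U$ itself}: since $x \mapsto \log\frac{x}{1-x}$ is increasing, $\max_i \pi_i(t) < \tilde{\pi}$ forces $U(t) < \log\frac{\tilde{\pi}}{1-\tilde{\pi}}$, so the drift is at least $\tilde{\pi}E$ on the level set $\{U(t) \geq \log\frac{\tilde{\pi}}{1-\tilde{\pi}}\}$; it then invokes a single hitting-time lemma for submartingales with two drift regimes, proved via a piecewise-linear potential (a Burnashev-type supermartingale) and one application of optional stopping. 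That construction handles the de-concentration excursions automatically, which is precisely the bookkeeping your split at the first concentration time $\sigma$ forces you to do by hand. Your excursion argument is the one step requiring genuine work, but it does close: uniformly positive drift $R>0$ together with bounded increments yields exponential return-probability bounds (exponential supermartingale/Azuma), hence $O(1)$ expected occupation time below the threshold, which folds into the $o\big(\log\frac{1}{\delta\epsilon}\big)$ remainder as you claim. Two hygiene points are worth making explicit. First, the bounded-increment hypothesis you assume (a finite maximal one-step log-likelihood ratio, $C_2<\infty$ in the source's notation) is genuinely needed for the overshoot bounds, the excursion control, and the legitimacy of every optional-stopping step; it is an implicit hypothesis of the Fact as stated here, inherited from the source's channel model (and satisfied by this paper's 1-bit model whenever the crossover probabilities $p[l]$ stay bounded away from $0$ and $\tfrac12$). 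Second, your phase-two identity needs a \emph{lower} bound on $\mathbb{E}[U(\sigma)]$, not only the upper bound you state for phase one; this holds because at time $\sigma$ the off-peak mass is at most $1-\tilde{\pi}$, so its negative contribution to $U(\sigma)$ is $O\big((1-\tilde{\pi})\log\frac{1}{\delta}\big) = O(1)$ for the stated choice of $\tilde{\pi}$ — one line worth adding. With those two lines supplied, your route is a valid, somewhat more hands-on alternative to the source's potential-function proof; what the source's trick buys is exactly the elimination of your renewal/excursion analysis.
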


On the other hand, in order to give a lower bound for the EJS divergence, it is useful to introduce the Jensen Shannon (JS) divergence \cite{Lin1991}, defined as
\begin{equation}
JS(\.\pi(t), \gamma) := \sum_{i=1}^{1\slash \delta} \pi_i(t) D\[( P_{y_{t+1}|  i , \gamma} \Big\| P_{y_{t+1}} \]),
\end{equation}
where $P_{y_{t+1}} = f(y_{t+1}) = \sum_{i=1}^{1\slash\delta}  P_{y_{t+1}|  i , \gamma }.$

\begin{fact}[Lemma 2 in \cite{Naghshvar2015}] \label{fact:JS_EJS}
The EJS divergence is lower bounded by the Jensen Shannon (JS) divergence : 
\begin{equation}
    EJS(\.\pi(t), \gamma) \geq JS(\.\pi(t), \gamma).
\end{equation}
\end{fact}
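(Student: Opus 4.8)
The plan is to reduce the inequality to a termwise comparison of KL divergences, by exploiting the fact that for each index $i$ the full predictive distribution $P_{y_{t+1}}$ is itself a convex combination of the conditional law $P_{y_{t+1}\mid i,\gamma}$ and the extrinsic mixture $P_{y_{t+1}\mid \neq i,\gamma}$. Once this structural observation is in place, convexity of the KL divergence does all the work and no delicate estimation is required.

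First I would record the key algebraic identity. Fixing $i$, splitting off the $i$-th summand of the marginal, and then recognizing the definition of the extrinsic mixture gives
\begin{equation}
  P_{y_{t+1}} = \pi_i(t)\, P_{y_{t+1}\mid i,\gamma} + (1-\pi_i(t))\, P_{y_{t+1}\mid \neq i,\gamma},
\end{equation}
since $\sum_{j\neq i}\pi_j(t)\,P_{y_{t+1}\mid j,\gamma} = (1-\pi_i(t))\,P_{y_{t+1}\mid \neq i,\gamma}$ by the very definition of $P_{y_{t+1}\mid \neq i,\gamma}$. Thus the marginal lies on the segment joining the conditional $P_{y_{t+1}\mid i,\gamma}$ to its extrinsic complement $P_{y_{t+1}\mid \neq i,\gamma}$.

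Next I would invoke joint convexity of $D(\cdot\,\|\,\cdot)$, in particular convexity in the second argument for a fixed first argument. Applying this to the convex combination above with first argument $P_{y_{t+1}\mid i,\gamma}$ yields
\begin{equation}
  D\!\left(P_{y_{t+1}\mid i,\gamma}\,\big\|\,P_{y_{t+1}}\right) \le (1-\pi_i(t))\, D\!\left(P_{y_{t+1}\mid i,\gamma}\,\big\|\,P_{y_{t+1}\mid \neq i,\gamma}\right),
\end{equation}
where the $\pi_i(t)$-weighted term dropped out because $D(P_{y_{t+1}\mid i,\gamma}\,\|\,P_{y_{t+1}\mid i,\gamma})=0$. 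Since $1-\pi_i(t)\le 1$, the right-hand side is at most $D(P_{y_{t+1}\mid i,\gamma}\,\|\,P_{y_{t+1}\mid \neq i,\gamma})$, giving the termwise bound $D(P_{y_{t+1}\mid i,\gamma}\,\|\,P_{y_{t+1}}) \le D(P_{y_{t+1}\mid i,\gamma}\,\|\,P_{y_{t+1}\mid \neq i,\gamma})$.

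Finally I would multiply each termwise inequality by $\pi_i(t)\ge 0$ and sum over $i=1,\dots,1/\delta$; the left-hand side is then precisely $JS(\.\pi(t),\gamma)$ and the right-hand side is precisely $EJS(\.\pi(t),\gamma)$, which is the claim. There is no genuine obstacle beyond spotting the convex-combination decomposition in the first step; the only point demanding care is confirming that $P_{y_{t+1}}$ is the posterior-weighted marginal $\sum_i \pi_i(t)\, P_{y_{t+1}\mid i,\gamma}$, so that the first identity is exact, rather than an unweighted sum of conditionals as a literal reading of the displayed definition might suggest.
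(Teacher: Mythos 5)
Your proof is correct and is essentially the same argument as the one given for Lemma~2 in the cited source \cite{Naghshvar2015} (the paper itself states this as a Fact and does not reprove it): writing $P_{y_{t+1}} = \pi_i(t)\,P_{y_{t+1}\mid i,\gamma} + (1-\pi_i(t))\,P_{y_{t+1}\mid \neq i,\gamma}$ and using convexity of $D(\cdot\,\|\,\cdot)$ in its second argument gives the termwise bound $D(P_{y_{t+1}\mid i,\gamma}\,\|\,P_{y_{t+1}}) \le (1-\pi_i(t))\,D(P_{y_{t+1}\mid i,\gamma}\,\|\,P_{y_{t+1}\mid \neq i,\gamma})$, which after weighting by $\pi_i(t)$ and summing yields $JS \le EJS$. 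You were also right to flag the typo in the displayed definition of $P_{y_{t+1}}$: the marginal must be the posterior-weighted mixture $\sum_i \pi_i(t)\,P_{y_{t+1}\mid i,\gamma}$, not the unweighted sum, for your convex-combination identity (and the fact itself) to hold.
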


\section{Variable-length analysis of hierarchical Posterior Matching}   \label{sec:analysis-proof}

Here we provide the variable-length analysis of hierarchical Posterior Matching by using the EJS. Throughout this section, we will focus on the settings and assumptions in Theorem \ref{thm}, where the beam pattern is perfect (Assumption \ref{asm:idealbeam}) and 1-bit measurements are used. The corresponding EJS is written as 
\begin{equation} 
        EJS(\.\pi(t), \gamma_h ) = \sum_{i=1}^{1\slash \delta} \pi_i(t) D\[( P_{\hat{y}_{t+1}|  i , \gamma_h} \Big\| P_{\hat{y}_{t+1}| \neq i, \gamma_h } \])
\end{equation}
    with the 1-bit measurement model
\begin{equation}
    P_{\hat{y}_{t+1} | i , \gamma } := \text{Bern}(\hat{y}_{t+1} \oplus \mathds{1}(\theta_i\in D_{l_t}^{k_t} ); p[l_t] ).
\end{equation}
\subsection{Proof of Theorem \ref{thm}}
Let $\gamma_h$ be the $hiePM$ feedback coding scheme. By the same method in \cite{Chiu2018}, the EJS can be lower bounded by
\begin{align}
    EJS(\.\pi(t),\gamma_h) &\geq I( 1\slash 3 ;p[l_{t+1}]) , \ \forall \ t  \label{eq:R_lb_thm} \\
    EJS(\.\pi(t),\gamma_h) &\geq \tilde{\.\pi} C_1(p[\log_2 (1\slash\delta)  ]) ,\ \forall \max_i \.\pi_i \geq  \tilde{\.\pi} \label{eq:E_lb_thm}
\end{align}
(for completeness, we include the proof of equation (\ref{eq:R_lb_thm}) and (\ref{eq:E_lb_thm}) in Lemma \ref{lemma:EJSori}). By (\ref{eq:R_lb_thm}), (\ref{eq:E_lb_thm}) and Fact \ref{fact:EJSthm}, we immediately have
\begin{equation}
\begin{aligned}
    \expect{\tau_{\epsilon,\delta}} \leq  \frac{\log (1\slash \delta)}{I(1\slash 3;p[1])} + \frac{\log (1\slash \epsilon)}{C_1(p[\log_2 (1\slash\delta)])}   + o( \log(\frac{1}{\delta \epsilon})).
\end{aligned}
\end{equation}
The gap from $I(1\slash 3;p[1])$ to $I(1\slash 3;p[l])$ is done similarly to \cite{Chiu2016}: we need to further show that $hiePM$ is able to zoom-in to higher level beamforming and effectively obtain less noisy measurements ($p\big[\frac{K_0 \lceil \log\log \frac{1}{\delta } \rceil}{\log2} -1 \big]< p[1]$) for most of the time during initial access (Lemma \ref{lemma:EJS_lq}). Indeed, let $E_t = \{ l_{t+1} < \frac{K_0 \lceil \log\log \frac{1}{\delta } \rceil}{\log2} -1 \}$ be the event of using a lower level codeword, and let $F_n = \bigcup_{t=n}^{\infty} E_t$, by the total expectation theorem and the union bound we have
\begin{equation} \label{n_bound_1}
    \begin{aligned}
       \expect{\tau_{\epsilon,\delta}} &= \int_{\Omega} \tau_{\epsilon,\delta} \ d\mathbb{P} \leq \sum_{t=n}^{\infty} \int_{E_t} \tau_{\epsilon,\delta} \ d\mathbb{P} + \int_{F_n^C} \tau_{\epsilon,\delta} \ d\mathbb{P} \\
       &= \sum_{t=n}^{\infty} \int_{E_t} \expect{ \tau_{\epsilon,\delta} | \.\pi(t) } \ d\mathbb{P} + \int_{F_n^C} \expect{ \tau_{\epsilon,\delta} | \.\pi(n) }\ d\mathbb{P}.\\
    \end{aligned}
\end{equation}

By the time homogeneity of the Markov Chain $\.\pi(t)$ together with Lemma \ref{lemma:EJSori}, we can upper bound the two terms associated with the ``good'' event $F_n^C$ and the ``bad'' but low probability event $E_t$ in (\ref{n_bound_1}) as
\begin{equation} \label{eq:Et}
\begin{aligned}
    \int_{E_t} &\expect{ \tau_{\epsilon,\delta} | \.\pi(t) } \ d\mathbb{P} \leq \prob{E_t}  \times \\
    & \[( t+  \frac{\log \frac{1}{\delta} }{I(1\slash 3;p[1])} + \frac{\log \frac{1}{\epsilon}}{C_1(p[\log_2 (1\slash\delta)])} +  o( \log(\frac{1}{\delta \epsilon}))  \]),
\end{aligned}
\end{equation}
and
\begin{equation} \label{eq:Fnc}
\begin{aligned}
    \int_{F_n^C} &\expect{ \tau_{\epsilon,\delta} | \.\pi(n) } \ d\mathbb{P} \leq  \\
    &  n+  \frac{\log \frac{1}{\delta} }{I(1\slash 3;p[l'])} + \frac{\log \frac{1}{\epsilon}}{C_1(p[\log_2 (1\slash\delta)])} +  o( \log(\frac{1}{\delta \epsilon})),
\end{aligned}
\end{equation}
where $l' := \frac{K_0 \lceil \log\log \frac{1}{\delta } \rceil}{\log2} -1 $. Plugging (\ref{eq:Et}) and (\ref{eq:Fnc}) back to (\ref{n_bound_1}) and further with Lemma~\ref{lemma:lq} upper bounding $\prob{E_t}$, we have
\begin{equation} \small
    \begin{aligned} \label{n_bound_2}
        \expect{\tau_{\epsilon,\delta}} &\leq  \frac{k_0 e^{-n E_0}}{1-e^{-E_0}}  \[( n + \frac{\log \frac{1}{\delta} }{I(1\slash 3;p[1])} + \frac{\log \frac{1}{\epsilon}}{C_1(p[\log_2 (1\slash\delta)])}  \])  \\
        & + n+  \frac{\log \frac{1}{\delta} }{I(1\slash 3;p[l'])} + \frac{\log \frac{1}{\epsilon}}{C_1(p[\log_2 (1\slash\delta)])} +  o( \log(\frac{1}{\delta \epsilon}))
    \end{aligned}
\end{equation}
for $n>\frac{(l'+1)\log 2}{K_0}$. Finally, letting $n= \lceil \log\log \frac{1}{\delta \epsilon} \rceil$ in equation (\ref{n_bound_2}) we conclude the assertion of the theorem.

\subsection{Technical Lemmas} \label{sec:lemmas}

\begin{lemma} \label{lemma:EJSori}
Using the $hiePM$ beamforming strategy $\gamma_h$ on codebook $\mathcal{W}^S$  with $S = \log_2 (1\slash\delta)$, we have
    \begin{align}
        EJS(\.\pi(t),\gamma_h) &\geq I( 1\slash 3 ; p[l_{t+1}]) , \ \forall \ t  \label{eq:R_lb} \\
        EJS(\.\pi(t),\gamma_h) &\geq \tilde{\.\pi} C_1(p[\log_2 (1\slash\delta)  ]) ,\ \forall \max_i \.\pi_i \geq  \tilde{\.\pi} \label{eq:E_lb}
    \end{align}
\end{lemma}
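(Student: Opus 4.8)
The plan is to prove the two inequalities separately, exploiting the fact that under Assumption~\ref{asm:idealbeam} and the 1-bit model the measurement $\hat y_{t+1}$ is a single bit whose conditional law depends on $\phi$ only through the indicator $\mathds{1}(\phi\in D_{l_{t+1}}^{k_{t+1}})$, with flipping probability $p[l_{t+1}]$ set by Lemma~\ref{lemma:opt_threshold}. Consequently, conditioned on $\{\phi\in D_{l_{t+1}}^{k_{t+1}}\}$ versus its complement, the measurement collapses to a binary one: every $\theta_i$ inside the selected beam induces the law $\text{Bern}(1-p[l_{t+1}])$ on $\{\hat y_{t+1}=1\}$, and every $\theta_i$ outside induces $\text{Bern}(p[l_{t+1}])$. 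Writing $\rho := \pi_{D_{l_{t+1}}^{k_{t+1}}}(t)$ for the posterior mass carried by the selected beam, this binary reduction is the common starting point for both bounds.

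For the rate bound (\ref{eq:R_lb}) I would first invoke Fact~\ref{fact:JS_EJS} to pass to the Jensen--Shannon divergence, $EJS(\.\pi(t),\gamma_h)\ge JS(\.\pi(t),\gamma_h)$. Since the within-group conditional laws coincide, the $JS$ functional telescopes: grouping the $\theta_i$ inside and outside the beam, $JS(\.\pi(t),\gamma_h)$ reduces to the mutual information of a $\text{BSC}(p[l_{t+1}])$ driven by the input $X=\mathds{1}(\phi\in D_{l_{t+1}}^{k_{t+1}})\sim\text{Bern}(\rho)$, i.e. $JS(\.\pi(t),\gamma_h)=I(\rho;p[l_{t+1}])$. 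The crux is then a purely combinatorial property of the $hiePM$ selection rule: because the algorithm descends the dyadic tree always into the heavier child and finally retains, between levels $l$ and $l-1$, the codeword whose posterior is nearer to $\frac{1}{2}$, the retained mass satisfies $\rho\in[\frac{1}{3},\frac{2}{3}]$. Granting this, the symmetry of $q\mapsto I(q;p)$ about $q=\frac{1}{2}$ together with its concavity gives $I(\rho;p[l_{t+1}])\ge I(\frac{1}{3};p[l_{t+1}])$, which is exactly (\ref{eq:R_lb}).

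For the reliability bound (\ref{eq:E_lb}), suppose $\max_i\pi_i(t)=\pi_{i^*}(t)\ge\tilde{\.\pi}>\frac{1}{2}$. Then at every level the beam containing $\theta_{i^*}$ is the unique heavy beam (posterior $\ge\pi_{i^*}>\frac{1}{2}$), so the descent follows $\theta_{i^*}$ all the way to the finest level $S=\log_2(1\slash\delta)$, where the selected beam is the singleton $\{\theta_{i^*}\}$ and $p[l_{t+1}]=p[S]$. Retaining only the $i=i^*$ summand of the nonnegative $EJS$ sum, I would evaluate the single Kullback--Leibler term $D\big(\text{Bern}(1-p[S])\,\big\|\,\text{Bern}(p[S])\big)=C_1(p[S])$, so that $EJS(\.\pi(t),\gamma_h)\ge\pi_{i^*}(t)\,C_1(p[S])\ge\tilde{\.\pi}\,C_1(p[S])$, giving (\ref{eq:E_lb}). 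This same singleton computation also disposes of the single configuration that the $[\frac{1}{3},\frac{2}{3}]$ argument cannot reach, namely when $\pi_{i^*}>\frac{2}{3}$ forces the finest level already for (\ref{eq:R_lb}): there one has $EJS\ge\frac{2}{3}C_1(p[S])$, and an elementary scalar comparison $\frac{2}{3}C_1(p)\ge I(\frac{1}{3};p)$ closes the gap.

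I expect the combinatorial claim $\rho\in[\frac{1}{3},\frac{2}{3}]$ to be the main obstacle. The key facts to marshal are that the heavier child of a node of posterior $a>\frac{1}{2}$ carries mass $b\in[a/2,a]$, and that the descent halts at the first level whose beam has posterior at most $\frac{1}{2}$; one then analyzes the two outcomes of the level-$l$ versus level-$(l-1)$ comparison to show that whichever codeword wins carries mass in $[\frac{1}{3},\frac{2}{3}]$, taking due care of the boundary at level $S$ where the split may become lopsided and is instead absorbed by the direct $C_1$ estimate. The remaining ingredients, the $JS$-to-mutual-information identity and the $C_1$ evaluation, are routine consequences of the binary reduction.
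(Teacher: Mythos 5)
Your proposal is correct and follows essentially the same route as the paper's proof: the reduction, under Assumption~\ref{asm:idealbeam} and the 1-bit model, of every conditional law to $\text{Bern}(p[l_{t+1}])$ or $\text{Bern}(1-p[l_{t+1}])$; for (\ref{eq:E_lb}) the observation that the descent tracks the heaviest angle down to level $S$ so that keeping the single summand of the argmax index gives $\tilde{\pi}\, D\bigl(\text{Bern}(1-p[S]) \,\|\, \text{Bern}(p[S])\bigr) = \tilde{\pi}\, C_1(p[\log_2(1/\delta)])$; and for (\ref{eq:R_lb}) the same dichotomy the paper uses, namely $\rho := \pi_{D_{l_{t+1}}^{k_{t+1}}}(t) \in [\tfrac13,\tfrac23]$ (handled via Fact~\ref{fact:JS_EJS}, the identity $JS = I(\rho;p[l_{t+1}])$, and concavity plus symmetry of $I(\cdot\,;p)$) versus the level-$S$ corner case $\rho > \tfrac23$; moreover, the interval property $\rho\in[\tfrac13,\tfrac23]$ that you flag as the main obstacle is exactly what the paper also asserts from the selection rule, and your sketch (the heavier child of a node of mass $a$ carries mass in $[a/2,a]$, the descent halts at the first level with mass $\le\tfrac12$, then compare parent and child against $\tfrac12$) is the right justification. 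The only substantive deviation is how the corner case $\rho>\tfrac23$ is closed: the paper lower-bounds \emph{all} EJS summands, using monotonicity of $\alpha \mapsto D\bigl(\text{Bern}(p)\,\|\,\alpha\,\text{Bern}(1-p)+(1-\alpha)\text{Bern}(p)\bigr)$, to obtain $EJS \ge I(\tfrac12;p[S]) \ge I(\tfrac13;p[S])$, whereas you keep only the singleton term to get $\tfrac23 C_1(p[S])$ and then invoke the scalar comparison $\tfrac23 C_1(p) \ge I(\tfrac13;p)$; that comparison is indeed valid and elementary, since $I(\tfrac13;p) \le I(\tfrac12;p) = D\bigl(\text{Bern}(p)\,\|\,\text{Bern}(\tfrac12)\bigr) \le \tfrac12 D\bigl(\text{Bern}(p)\,\|\,\text{Bern}(1-p)\bigr) = \tfrac12 C_1(p)$, where the last inequality uses convexity of the KL divergence in its second argument and the fact that $\text{Bern}(\tfrac12)$ is the equal mixture of $\text{Bern}(p)$ and $\text{Bern}(1-p)$. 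So both closings of the corner case work; the paper's avoids any comparison between $C_1$ and mutual information, while yours reuses the same singleton computation already needed for (\ref{eq:E_lb}) and is in that sense slightly more economical.
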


\begin{proof}
We first prove equation (\ref{eq:E_lb}). By the selection rule of $hiePM$, the last level beamforming $\*w_t = \*w(D_S^{k_t})$ is used whenever $\max_{i} \pi_i(t) \geq \tilde{\pi} > 1\slash 2$. Therefore, 
\begin{equation}
\begin{aligned}
     EJS(\.\pi(t), \gamma_h) &= \sum_{i=1}^{1\slash \delta} \pi_i(t) D\[( P_{\hat{y}_{t+1}|  i , \gamma_h} \Big\| P_{\hat{y}_{t+1}| \neq i, \gamma_h } \])\\
     &\geq \tilde{\pi} D\[( P_{\hat{y}_{t+1}| i , \gamma_h} \Big\| P_{\hat{y}_{t+1}| \neq i, \gamma_h} \]) \\
     &= \tilde{\pi}  D( \text{Bern}(1-p[S]) \| \text{Bern}(p[S]))  \\
     &= \tilde{\pi} C_1(p[\log_2 (1\slash\delta)  ]).
\end{aligned}
\end{equation}

It remains to show equation (\ref{eq:R_lb}). For notational simplicity, let 
\begin{equation}
    \rho \equiv \pi_{D_{l_{t+1}}^{k_{t+1}}}(t) := \sum_{\theta_i\in D_{l_{t+1}}^{k_{t+1}}} \pi_i(t)
\end{equation}
and $B^0 \equiv \text{Bern}(p[l_{t+1}])$, $B^1 \equiv \text{Bern}(1-p[l_{t+1}])$. We separate the proof into two cases:

If $\rho >  2 \slash 3$, we know that $l_{t+1} = S$ by the selection rule of $hiePM$. Therefore, the set $ D_{l_{t+1}}^{k_{t+1}}$ contains only 1 angle. Let $ D_{l_{t+1}}^{k_{t+1}} = \{\theta^*\} $, we have
\begin{equation}
\begin{aligned}
     &EJS(\.\pi(t), \gamma_h) =\sum_{i=1}^{1\slash \delta} \pi_i(t) D\[( P_{\hat{y}_{t+1}|  i , \gamma_h} \Big\| P_{\hat{y}_{t+1}| \neq i, \gamma_h } \])\\
    & =   \rho D\big( B^1 \big\| B^0 \big)   \\ 
     & + \sum_{i: \theta_i \neq \theta^*}  \pi_{i}(t)  D\[( B^0 \Big\| \frac{ \rho }{1-\pi_{i}(t)}   B^1 + \frac{1-\rho-\pi_i(t)}{1-\pi_i(t)} B^0 \]) \\[2mm]
     &\stackrel{(a)}{\geq}  D\[( B^0 \Big\| \frac{1}{2}  B^1 + \frac{1}{2} B^0 \])  \\
     & =  I(1\slash 2; p[l_{t+1}] ) \geq I(1\slash 3; p[l_{t+1}] ),  \\ 
\end{aligned}
\end{equation}
where (a) is by the fact that $ D( B^1 \| B^0 )  =  D( B^0 \| B^1 )   $ and that 
$D(  B^0  \| \alpha  B^1 + (1-\alpha) B^0 )$ is increasing in $\alpha$ for $0\leq \alpha \leq 1$, together with $\frac{ \rho }{1-\pi_{i}(t)} >  2\slash 3> 1\slash 2$.

For the other case where $\rho \leq 2 \slash 3$, again by the selection rule of $hiePM$, we have $1 \slash 3 \leq \rho \leq 2 \slash 3$. Now we can lower bound the $EJS$ as
\begin{equation}
\begin{aligned}
     EJS(\.\pi(t), \gamma_h) &\stackrel{(a)}{\geq} JS(\.\pi(t), \gamma_h)\\
     & = \rho D \[( B^1  \Big\|  \rho B^0 + (1-\rho) B^1  \]) \\
     & \qquad + (1-\rho) D \[( B^0  \Big\|  \rho B^0 + (1-\rho) B^1  \])   \\
     &= I(\rho; p[l_{t+1}] ) \stackrel{(b)}{\geq} I( 1\slash 3 ; p[l_{t+1}] )
\end{aligned}
\end{equation}
where (a) is by Fact \ref{fact:JS_EJS} and (b) is by the concavity of the mutual information w.r.t the input distribution, the symmetric of $I(\rho; p[l_{t+1}] )$ around $\rho = 1\slash 2$ for symmetric channels, and together with $1\slash 3 \leq \rho \leq 2\slash 3$. This concludes the assertion.
\end{proof}

\begin{lemma} \label{lemma:lq}
Using the $hiePM$ beamforming strategy $\gamma_h$ on codebook $\mathcal{W}^S$  with $S = \log_2 (1\slash\delta)$, we have 
\begin{equation}
    \prob{E_t} :=\prob{l_{t+1} \leq l} \leq k_0 e^{- E_0 t} 
\end{equation}
for all $ t>\frac{\log (l+1)\log 2 }{K_0}$, where $k_0=e^{\frac{K_0(l+1)\log 2}{(2l \log 2 + K_0)^2}}$, $E_0 = \frac{K_0^2}{2(2l \log 2 + K_0)^2}$, and $K_0>0$ is a constant defined in Lemma \ref{lemma:EJS_lq}.
\end{lemma}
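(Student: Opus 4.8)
The plan is to realize $\{l_{t+1}\leq l\}$ as a lower-tail event for the average log-likelihood $U(t)$ of Appendix~\ref{reEJS} and then apply a concentration inequality, exploiting that $U(t)$ has a strictly positive drift. The conceptual content is a coupling between the integer level $l_{t+1}$ chosen by $hiePM$ and the real-valued potential $U(t)$: each additional level of zoom-in costs about $\log 2$ of growth in $U$, so the algorithm can be using a shallow level $l_{t+1}\leq l$ only if $U$ has barely grown. Concretely, I would first extract from Lemma~\ref{lemma:EJS_lq} three ingredients: (i) the event inclusion
\begin{equation}
    \{\, l_{t+1}\leq l \,\}\ \subseteq\ \{\, U(t)-U(0)\leq (l+1)\log 2 \,\};
\end{equation}
(ii) the uniform drift bound $\expect{U(t+1)\mid \.\pi(t)}\geq U(t)+K_0$, which follows from the EJS lower bound of Lemma~\ref{lemma:EJSori} and defines the constant $K_0$; and (iii) an almost-sure bound $|U(t+1)-U(t)|\leq 2l\log 2$ on the per-step increment, valid throughout the shallow-level regime relevant to the event in (i).

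Next I would recenter the potential. Setting $M(t):=U(t)-U(0)-K_0 t$, the drift bound (ii) makes $M(t)$ a submartingale with $M(0)=0$, and (iii) gives the increment bound $|M(t+1)-M(t)|\leq 2l\log 2+K_0=:c$. On $\{l_{t+1}\leq l\}$, inclusion (i) yields $M(t)\leq (l+1)\log 2-K_0 t$, which is strictly negative precisely when $t>(l+1)\log 2/K_0$. Hence, for every such $t$,
\begin{equation}
    \prob{l_{t+1}\leq l}\ \leq\ \prob{M(t)\leq -a},\qquad a:=K_0 t-(l+1)\log 2>0 .
\end{equation}

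The final step is a one-sided Azuma--Hoeffding inequality for the supermartingale $-M(t)$ with bounded increments $c$, namely $\prob{M(t)\leq -a}\leq \exp(-a^2/(2tc^2))$. Substituting $a=K_0 t-(l+1)\log 2$ and $c=2l\log 2+K_0$ and expanding the square,
\begin{equation}
\begin{aligned}
    -\frac{a^2}{2tc^2} &=-\frac{K_0^2\,t}{2c^2}+\frac{K_0(l+1)\log 2}{c^2}-\frac{(l+1)^2\log^2 2}{2tc^2}\\
    &\leq -\frac{K_0^2\,t}{2c^2}+\frac{K_0(l+1)\log 2}{c^2},
\end{aligned}
\end{equation}
whence $\prob{l_{t+1}\leq l}\leq k_0\,e^{-E_0 t}$ with $E_0=K_0^2/(2c^2)$ and $k_0=\exp(K_0(l+1)\log 2/c^2)$, which are exactly the stated constants after inserting $c=2l\log 2+K_0$.

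I expect the crux to be ingredients (i) and (iii) rather than the concentration step. Establishing the level--potential coupling (i) requires translating $hiePM$'s discrete selection rule into the statement that a shallow chosen level forces $U(t)$ to remain within $(l+1)\log 2$ of its starting value, and (iii) requires a genuinely level-dependent increment bound: deeper levels use cleaner measurements and hence produce larger log-likelihood jumps, so the bound $2l\log 2$ can only be argued while the search stays at levels $\leq l$. Making this rigorous calls for applying Azuma to a potential stopped or truncated at the first exit from the shallow-level regime, rather than to the raw $U(t)$, so that the increment bound holds on every step up to time $t$; this localization is precisely the content carried by Lemma~\ref{lemma:EJS_lq}, on which the present argument rests. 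Once (i)--(iii) are in hand, the remaining computation is the routine substitution above.
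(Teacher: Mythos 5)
Your concentration template (recentered submartingale plus one-sided Azuma, landing on exactly the constants $c=2l\log 2+K_0$, $E_0=K_0^2/(2c^2)$, $k_0=\exp(K_0(l+1)\log 2/c^2)$) coincides with the paper's final step, but you attach it to the wrong potential, and the gap sits precisely in your ingredient (i). The paper does not work with the full-resolution log-likelihood $U(t)$; it introduces the \emph{nested} level-$l$ posterior $\pi^{\{l\}}_q(t)=\sum_{i\in \mathrm{bin}(q)}\pi_i(t)$ on $2^l$ bins and its potential $U^{\{l\}}(t)$, for which the correct implication chain is $U^{\{l\}}(t)\geq \log 2 \Rightarrow \max_q \pi^{\{l\}}_q(t)\geq 2/3 \Rightarrow l_{t+1}>l$, and whose initial value $U^{\{l\}}(0)=-\log(2^l-1)$ is what makes the threshold gap $(l+1)\log 2$ correct. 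For the full $U(t)$ your inclusion $\{l_{t+1}\leq l\}\subseteq\{U(t)-U(0)\leq (l+1)\log 2\}$ is false: $U(0)=-\log(1/\delta-1)\approx -S\log 2$, and a posterior with mass $\approx 1/2$ on each of two atoms lying in different level-$1$ bins has $U(t)\approx 0$, i.e.\ $U(t)-U(0)\approx S\log 2 \gg (l+1)\log 2$, while $hiePM$ still selects a level-$1$ or level-$2$ codeword. A shallow selected level constrains how posterior mass splits across \emph{coarse} bins, not how much the full-resolution $U$ has grown, so the "each zoom-in level costs $\log 2$ of growth in $U$" coupling only holds for $U^{\{l\}}$.

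Your proposed repairs do not close this. For (ii), the drift bound you cite from Lemma~\ref{lemma:EJSori} is a bound on the EJS driving the full $U$; what the Azuma step needs is a uniform drift bound for $U^{\{l\}}$, and this does not follow from the full-resolution bound by any coarse-graining argument — it is exactly the content of Lemma~\ref{lemma:EJS_lq}, which treats the case $l_{t+1}>l$ (selected set nested inside a single bin of mass $\geq 2/3$, with selected mass in $[1/3,2/3]$) and produces the second term in the $\min$ defining $K_0$; note this bound holds for \emph{all} $t$, not merely in a shallow-level regime. For (iii), once the nested potential is used, the bounded-difference property $|U^{\{l\}}(t+1)-U^{\{l\}}(t)+K_0|\leq 2l\log 2+K_0$ holds unconditionally (the potential lives on $2^l$ bins regardless of the level the algorithm probes), so the paper needs no stopping or truncation at all. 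Conversely, your localization device for the raw $U$ — applying Azuma to the process stopped at the first exit from the shallow-level regime — cannot work even in principle: the level process $l_{t+1}$ is not monotone (posterior mass can diffuse back out of a bin after unlucky measurements), so the event $\{l_{t+1}\leq l\}$ at time $t$ can occur after an excursion to deep levels, and is therefore not controlled by the stopped process.
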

\begin{proof}
Let $\.\pi^{\{l\}}(t)$ and $U^{\{l \}}(t)$ be the nested posterior of level $l$ and its averaged log-likelihood, defined in Appendix \ref{sec:nested_posterior}, equations (\ref{eq:nestedpi}) and (\ref{eq:nestedU}). Note that $U^{\{l \}}(t) \geq \log 2$ implies that $\max_q \pi^{\{l\}}_q(t) \geq 2\slash 3$, and in turn implies that $l_{t+1} > l$ by the selection rule of $hiePM$. Therefore, it suffices to show that 
\begin{equation} \label{eq:transU}
    \prob{U^{\{l \}}(t) < \log 2} \leq k_0 e^{- E_0 t} \quad \forall t>T_0.
\end{equation}
We will show (\ref{eq:transU}) using submartingale properties of $U^{\{l \}}(t)$ with Azuma's inequality \cite{Chung2006}. Indeed, by Lemma~\ref{lemma:EJS_lq} in Appendix~\ref{sec:nested_posterior}, $U^{\{l \}}(t) - K_0 t$ is a submartingale. Furthermore, we have bounded differences for this submartingale, $i.e.$
\begin{equation}
    |U^{\{l \}}(t+1) - U^{\{l \}}(t) + K_0  | \leq  2l \log 2 + K_0 ,
\end{equation} for all $t\geq 0$. Hence we have
\begin{equation}
\begin{aligned}
    &\mathbb{P}( U^{\{l \}}(t) < \log 2 ) \\
    &= \mathbb{P} \big( U^{\{l \}}(t) - K_0  t - U^{\{l \}}(0) < ( l +1) \log 2 - K_0 t \big) \\
    & \overset{(a)}{\leq} \exp\[( -  \frac{ (  ( l+1) \log 2 - K_0 t  )^2}{ 2t (2l \log 2 + K_0)^2  }  \]) \overset{(b)}{\leq} k_0 e^{-E_0 t} \\
\end{aligned}
\end{equation}
for $t> \frac{\log (l+1)\log 2 }{K_0}$, where $(a)$ is by Azuma's inequality and $(b)$ is by expanding the quadratic terms and rearrangements.
\end{proof}

\section{Nested Posterior and its EJS} \label{sec:nested_posterior}

In this section, we introduce the nested posterior and its EJS lower bound (Lemma \ref{lemma:EJS_lq}), which are used for proving Lemma \ref{lemma:lq}. Let posterior  $\pi_i(t)$ with $i=1,2,...,2^S$. We define a nested posterior $\.\pi^{\{l\}}$ of level $l<S$ with length $2^l$ as 
\begin{equation} \label{eq:nestedpi}
    \pi_q^{\{l\}} (t) := \sum_{i\in \text{bin}(q)} \pi_i(t), \quad q = 1,2,...,2^l,
\end{equation}
where $\text{bin}(q) := \{(q-1)2^{S-l}+1,...,q2^{S-l}\}$. Further, we define the functional log-likelihood on $\.\pi^{\{l\}}$ as
\begin{equation} \label{eq:nestedU}
    U^{\{l\}}(t)  := \sum_{q=1}^{2^l}\pi_q^{\{l\}}(t)  \log \frac{\pi_q^{\{l\}}(t)}{1-\pi_q^{\{l\}}(t)}.
\end{equation}
For any beamforming strategy $\gamma: \.\pi^{\{S\}}(t) \rightarrow \*w_{t+1}$ on the level $S$ posterior, the level $l<S$ log-likelihood $U^{\{l\}}(t)$ is a submartingale $w.r.t.$ $\.\pi(t)$. The expected drift can be written as
\begin{equation}
\begin{aligned}
    &\expect{ U^{\{l\}}(t+1) | \.\pi(t) } \\
    &\qquad =  U^{\{l\}}(t) + EJS(\.\pi^{\{l\}}(t), \gamma ; \.\pi(t)),
\end{aligned}
\end{equation}
where
\begin{equation} 
\begin{aligned}
     &EJS(\.\pi^{\{l\}}(t), \gamma ; \.\pi(t)) \\
     & \qquad= \sum_{q=1}^{2^l} \pi^{\{l\}}_q(t) D\[( P_{y_{t+1}|  q , \gamma} \Big\| P_{y_{t+1}| \neq q, \gamma } \])
\end{aligned}
\end{equation}
with 
\begin{equation}
\begin{aligned}
    &P_{y_{t+1} | \in \text{bin}(q) , \gamma } :=  \frac{1}{\sum_{i\in \text{bin}(q)} \pi_i(t) } \\
    & \quad  \times \sum_{i\in \text{bin}(q)} \pi_i(t) f\big(y_{t+1} \big|\phi =\theta_i , \*w_{t+1} = \gamma(\.\pi(t)) \big) 
\end{aligned}
\end{equation} and 
\begin{equation}
\begin{aligned}
    &P_{y_{t+1} | \notin \text{bin}(q) , \gamma } :=  \frac{1}{\sum_{i\notin \text{bin}(q)} \pi_i(t) } \\
    & \quad  \times \sum_{i\notin \text{bin}(q)} \pi_i(t) f\big(y_{t+1} \big|\phi =\theta_i , \*w_{t+1} = \gamma(\.\pi(t)) \big).
\end{aligned}
\end{equation}

\begin{lemma} \label{lemma:EJS_lq}
With same assumptions as Theorem \ref{thm}, using $hiePM$ with codebook $\mathcal{W}^S$ on $\.\pi(t) \equiv \.\pi^{\{S\}}(t)$, we have 
\begin{equation}
\begin{aligned}
    & EJS(\.\pi^{\{l\}}(t),\gamma_h; \.\pi(t)) \geq K_0 := \min \Big\{  I \Big(\frac{1}{3},p[1] \Big), \\
            &\frac{2}{3} D\Big(\frac{1}{3} \text{Bern}(1-p[1]) + \frac{2}{3} \text{Bern}(p[1])  \Big\| \text{Bern}(p[1])  \Big) \Big\} \\
\end{aligned}
\end{equation}
for all $t>0$, for any $l<S$.
\end{lemma}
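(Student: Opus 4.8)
The plan is to bound $EJS(\.\pi^{\{l\}}(t),\gamma_h;\.\pi(t))$ uniformly over all $t$ and all $l<S$ by a case analysis that compares the level $l_{t+1}$ of the codeword that $hiePM$ actually queries with the nesting level $l$. Since the $1$-bit measurement acts as a BSC with crossover probability $p[l_{t+1}]$, the two conditional laws appearing in every summand are $B^1:=\text{Bern}(1-p[l_{t+1}])$ for angles inside the queried set $D_{l_{t+1}}^{k_{t+1}}$ and $B^0:=\text{Bern}(p[l_{t+1}])$ for angles outside. Because $p[l_{t+1}]\le p[1]\le 1\slash2$ and both $I(q;\,\cdot\,)$ and $D(\,\cdot\,\|\text{Bern}(\cdot))$ are monotone in the noise level, it will suffice to lower bound each case keeping $p[l_{t+1}]$ and then substitute the noisiest value $p[1]$ at the very end; this is exactly why $K_0$ is expressed through $p[1]$.

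First I would treat the case $l_{t+1}\le l$. Because the codebook is dyadically nested (Assumption~\ref{asm:idealbeam} and $D_l^k=D_{l+1}^{2k}\cup D_{l+1}^{2k-1}$), the queried region is then a union of whole level-$l$ bins, so the induced channel separates $\.\pi^{\{l\}}(t)$ cleanly into an ``inside'' group of total mass $\rho:=\pi_{D_{l_{t+1}}^{k_{t+1}}}(t)$ emitting $B^1$ and an ``outside'' group of mass $1-\rho$ emitting $B^0$. Applying Fact~\ref{fact:JS_EJS} to pass from $EJS$ to $JS$, the $JS$ functional collapses to the BSC mutual information $I(\rho;p[l_{t+1}])$. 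Using concavity and symmetry of $I(\,\cdot\,;p)$ about $1\slash2$, together with the $hiePM$ guarantee $\tfrac13\le\rho$ (established inside Lemma~\ref{lemma:EJSori}), this is at least $I(1\slash3;p[l_{t+1}])\ge I(1\slash3;p[1])$, the first branch of $K_0$.

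Next I would treat the \emph{buried} case $l_{t+1}>l$, where $D_{l_{t+1}}^{k_{t+1}}$ lies strictly inside a single level-$l$ bin $q^\ast$. The key structural fact, read off from the $hiePM$ descent rule, is that to reach a codeword below level $l$ the posterior of the containing bin must have exceeded one half, so $\pi^{\{l\}}_{q^\ast}(t)>1\slash2$; moreover $\rho\ge\tfrac13$ and every bin other than $q^\ast$ emits $B^0$. Dropping all the nonnegative summands except the one indexed by $q^\ast$, the $EJS$ is at least $\pi^{\{l\}}_{q^\ast}(t)\,D\big(\beta B^1+(1-\beta)B^0\,\big\|\,B^0\big)$ with $\beta:=\rho\slash\pi^{\{l\}}_{q^\ast}(t)\ge\tfrac13$. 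Invoking convexity and monotonicity of $\alpha\mapsto D(\alpha B^1+(1-\alpha)B^0\|B^0)$ and the lower bounds on $\pi^{\{l\}}_{q^\ast}(t)$ and on $\beta$, this dominates $\tfrac23\,D\big(\tfrac13 B^1+\tfrac23 B^0\,\big\|\,B^0\big)$, and after substituting $p[1]$ this is the second branch of $K_0$. Taking the minimum of the two branches finishes the proof.

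The hard part will be the buried case: there the level-$l$ nested posterior genuinely cannot resolve the binary outcome of the query, so only the single mixed bin $q^\ast$ carries information, and one must extract a $\delta$-independent constant from that lone term. This hinges on pinning both $\pi^{\{l\}}_{q^\ast}(t)$ and $\beta$ away from zero purely from the $hiePM$ selection and descent logic, and on the joint monotonicity of the divergence in the mixing weight and in the noise parameter $p[l_{t+1}]$; the clean-separation case, by contrast, is essentially a transcription of the $I(1\slash3;p)$ argument already carried out in Lemma~\ref{lemma:EJSori}.
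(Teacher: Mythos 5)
Your proposal is correct and follows the paper's proof in its overall architecture: the same dichotomy on whether the queried level satisfies $l_{t+1}\le l$ or $l_{t+1}>l$; the same reduction of the first case to the $JS$-based computation of Lemma~\ref{lemma:EJSori} giving $I(\rho;p[l_{t+1}])\ge I(1\slash 3;p[1])$ (the paper compresses this entire case into the citation ``by Lemma~\ref{lemma:EJSori} we conclude,'' while you spell out why the nested posterior separates cleanly when the queried set is a union of whole level-$l$ bins); and, in the buried case, the same move of discarding every summand except the single mixed bin $q^\ast$, whose conditional is $\beta B^1+(1-\beta)B^0$ with $\beta=\rho\slash\pi^{\{l\}}_{q^\ast}(t)$ against the pure $B^0$ of all other bins, followed by substituting the noisiest parameter $p[1]$ via monotonicity in the crossover probability. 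The one place you genuinely depart from the paper is the constant extraction in the buried case, and it works in your favor: the paper asserts $\pi^{\{l\}}_{q_t}(t)>2\slash 3$ whenever $l_{t+1}>l$ and then uses only monotonicity of $f(\alpha):=D(\alpha B^1+(1-\alpha)B^0\,\|\,B^0)$, but the descent rule of Algorithm~1 only tests $\pi_{D_l^k}(t)>1\slash 2$ at each visited level, so $1\slash 2$ is all that an ancestor of the selected codeword is guaranteed (e.g., an ancestor of mass $0.6$ whose larger child has mass $0.41$ yields $l_{t+1}=l+1$), and monotonicity alone from $\pi^{\{l\}}_{q^\ast}>1\slash 2$, $\beta\ge 1\slash 3$ would only give $\tfrac{1}{2}f(1\slash 3)$, short of the claimed $\tfrac{2}{3}f(1\slash 3)$. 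Your invocation of convexity closes exactly this gap: since $f$ is convex with $f(0)=0$, the ratio $f(\alpha)\slash\alpha$ is nondecreasing, hence $\pi^{\{l\}}_{q^\ast}\,f\big(\rho\slash\pi^{\{l\}}_{q^\ast}\big)\ge f(\rho)\ge f(1\slash 3)\ge \tfrac{2}{3}f(1\slash 3)$, which recovers $K_0$ with room to spare and simultaneously repairs the paper's unjustified intermediate claim. When writing the final version, spell out this superlinearity inequality explicitly (it is the load-bearing step), and state once that both $I(1\slash 3;p)$ and $D\big(\tfrac{1}{3}\text{Bern}(1-p)+\tfrac{2}{3}\text{Bern}(p)\,\|\,\text{Bern}(p)\big)$ are decreasing in $p$ on $[0,1\slash 2]$, which licenses the final replacement of $p[l_{t+1}]$ by $p[1]$ that the paper performs silently; also note in your first case that $l_{t+1}\le l<S$ forces $\rho\le 2\slash 3$ (since $\rho>2\slash 3$ would force $l_{t+1}=S$), which is what the concavity-and-symmetry argument for $I(\rho;p)\ge I(1\slash 3;p)$ actually requires.
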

\begin{proof}
Given any $l<S$, if the selected codeword $\*w(D_{l_{t+1}}^{k_{t+1}})$ is such that $l_{t+1}\leq l$, by Lemma \ref{lemma:EJSori} we conclude the results. If otherwise $l_{t+1}>l$, then for any $\theta_i\in D_{l_{t+1}}^{k_{t+1}}$, $i\in \text{bin}(q_t)$ for some $q_t$.  For notational simplicity, let 
    $\rho \equiv \pi_{D_{l_{t+1}}^{k_{t+1}}}(t) := \sum_{\theta_i\in D_{l_{t+1}}^{k_{t+1}}} \pi_i(t)$
and $B^0 \equiv \text{Bern}(p[l_{t+1}])$, $B^1 \equiv \text{Bern}(1-p[l_{t+1}])$. We have
\begin{equation} 
\begin{aligned}
     &EJS(\.\pi^{\{l\}}(t), \gamma ; \.\pi(t)) \\
     &= \sum_{q=1}^{2^l} \pi^{\{l\}}_q(t) D\[( P_{\hat{y}_{t+1}|  q , \gamma} \Big\| P_{\hat{y}_{t+1}| \neq q, \gamma } \]) \\
     &\stackrel{(a)}{\geq} \frac{2}{3} D( \rho B^1 + (1-\rho) B^0 \| B^0 ) \stackrel{(b)}{\geq} \frac{2}{3} D( \frac{1}{3} B^1 + \frac{2}{3} B^0 \| B^0 ) \\
     &\geq \frac{2}{3} D\Big(\frac{1}{3} \text{Bern}(1-p[1]) + \frac{2}{3} \text{Bern}(p[1])  \Big\| \text{Bern}(p[1])  \Big) \Big\}.
\end{aligned}
\end{equation}
where (a) and (b) are by the selection rule of $hiePM$ that $\.\pi^{\{l\}}_{q_t}(t)>2\slash 3$ whenever $l_t > l$ and that $1 \slash 3 \leq \rho \leq 2 \slash 3$. This concludes the assertion.
\end{proof}

\end{document}